\documentclass[twocolumn]{autart}  
\makeatletter
\let\ps@copyright\ps@plain
\makeatother
\usepackage{amsmath,amssymb,mathtools,bm}
\usepackage{graphicx}
\usepackage{enumitem}
\usepackage{algorithm}

\usepackage{algorithmic}
\usepackage{xcolor}
\usepackage{comment}
\usepackage{tikz}
\usepackage{cite}
\usepackage{url}
\usepackage{booktabs}
\usepackage{subcaption}

\newtheorem{assumption}{Assumption}
\newtheorem{definition}{Definition}

\newtheorem{proposition}{Proposition}
\newtheorem{theorem}{Theorem}

\newtheorem{corollary}{Corollary}

\newcommand{\R}{\mathbb{R}}

\newcommand{\cV}{\mathcal{V}}

\newcommand{\cA}{\mathcal{A}}

\newcommand{\cE}{\mathcal{E}}
\newcommand{\cG}{\mathcal{G}}
\newcommand{\cN}{\mathcal{N}}
\newcommand{\evec}{\mathbf{e}}
\newcommand{\F}{\mathbf{F}}
\newcommand{\K}{\mathbf{K}}

\newcommand{\Fb}{\boldsymbol{\mathcal{F}}}

\newcommand{\col}{\operatorname{col}}
\newcommand{\diag}{\operatorname{diag}}
\newcommand{\blkdiag}{\operatorname{blkdiag}}

\newcommand{\x}{\mathbf{x}}
\newcommand{\uvec}{\mathbf{u}}
\newcommand{\w}{\mathbf{w}}

\newcommand{\Erre}{\mathbf{R}} 

\begin{document}

\begin{frontmatter}

\title{Distributed Performance Boosting through Neural Control for Plug-and-Play Networks} 

\thanks{This work was supported as a part of NCCR Automation, a National Centre of Competence in Research, funded by the Swiss National Science Foundation (grant number 51NF40\_225155), and the NECON project (grant number 200021-219431). Implementation
of the proposed method is available at \url{https://github.com/DaniloSaccani/PBdistributed_PnP}. }          

\author[EPFL]{Danilo Saccani}\ead{danilo.saccani@epfl.ch}, 
\author[EPFL]{Leonardo Massai}\ead{l.massai@epfl.ch}, 
\author[EPFL]{Giancarlo Ferrari-Trecate}\ead{giancarlo.ferraritrecate@epfl.ch},               

\address[EPFL]{Ecole polytechnique federale de Lausanne (EPFL), CH-1015 Lausanne, Switzerland.}  %

\begin{keyword} 
Distributed control, performance boosting, fail-safe training,
$\ell_2$ stability, plug-and-play.
\end{keyword}   
\begin{abstract}
We address the problem of improving the performance of pre-stabilized
networked systems subject to plug-and-play (PnP) operations, namely
additions or removals of subsystems. While distributed baseline
controllers can preserve stability after a PnP transition, augmenting
them with expressive neural policies can invalidate the
formal stability guarantees. To overcome this issue, we build on recently proposed Performance Boosting controllers based on the Internal Model Control (IMC) principle, which reduce
closed-loop stability to a finite $\ell_2$-gain requirement on an
auxiliary operator. 
In the distributed setting, this operator is formed by interconnecting local neural operators that exchange information only between neighboring subsystems. We then derive \emph{free parametrizations} of this interconnection, namely explicit maps from unconstrained trainable variables to interconnected neural operators satisfying a prescribed network gain bound for every parameter value. The obtained distributed neural controller can therefore be trained by unconstrained gradient-based optimization without stability constraints or projection steps, while retaining the prescribed gain bound at every training iterate.
We develop two complementary parametrizations. A neighbor-based compositional parametrization uses local coupling bounds, so that after a PnP event only subsystems whose bounds change need to update their gain parameters. A network-norm parametrization instead uses a bound on the overall coupling matrix and can reduce conservatism, at the price of requiring network-wide information.
A PnP DC-microgrid case study shows that the proposed neural layer improves
voltage transients under topology changes and load variations while
preserving nominal closed-loop stability.
\end{abstract}

\end{frontmatter}

\section{Introduction}
Modern large-scale cyber-physical systems, such as smart grids,
vehicular platoons, and multi-agent robotic networks, rely fundamentally on distributed control strategies. In such systems, decisions are made
by individual subsystems using limited information, avoiding the
communication burden, latency, and fragility of centralized
architectures. At the same time, dynamic couplings imply that poorly designed individual actions may propagate through the network and destabilize the overall interconnection~\cite{antonelli2013interconnected}.
This makes stability guarantees a central requirement in distributed control design.

Model-based distributed control provides several tools to certify the
stability of interconnected systems. Distributed Model Predictive
Control (MPC) schemes enforce stability and constraints through
terminal ingredients, invariant sets, or tube-based
arguments~\cite{aboudonia2022reconfigurable,carron2023multi,saccani2023model,zeilinger2013plug}.
Alternatively, dissipativity and passivity-based methods provide
compositional certificates for network stability, often leading to subsystem-wise linear matrix inequality conditions or explicit inequalities that depend only on each subsystem model, its controller and neighboring coupling terms~\cite{cucuzzella2019distributed,aboudonia2021passivity,fu2024distributed,malan2024passivity}. 
These approaches provide stability by design, which is essential in safety-critical networks. However, they typically optimize restricted performance criteria, such as quadratic costs or robustness margins, or derive stability conditions without optimizing a task-dependent performance objective. Thus, a key challenge remains: how to retain formal closed-loop stability guarantees while optimizing expressive distributed feedback laws for general, possibly non-convex performance objectives.

Learning-based methods, such as Reinforcement Learning (RL), address the
performance side of this challenge by optimizing highly expressive
feedback laws directly from data~\cite{govinda2025survey}.
In networked systems, multi-agent RL and Graph Neural Networks (GNNs)
provide scalable policy classes based on subsystem measurements and
neighbor-to-neighbor messages, which can promote generalization across
network topologies~\cite{zhang2021multi}.
Yet, these approaches typically do not provide
hard closed-loop stability guarantees; stability is instead encouraged
indirectly through reward shaping, regularization, or external safety
layers~\cite{garcia2015comprehensive}. Moreover, in these approaches, scalability often relies on shared policy parameters and homogeneous subsystem models, which can be restrictive in heterogeneous engineering networks.

A recent line of work addresses stability by parametrizing classes of
stabilizing controllers through nonlinear extensions of Youla, Internal
Model Control (IMC), and system-level
parameterizations~\cite{manchester2026neural,furieri2022neural,galimberti2025parametrizations}.
These parametrizations have enabled neural controllers to be trained on
top of pre-stabilized nonlinear systems by minimizing general, possibly
non-convex performance objectives while preserving closed-loop
stability~\cite{furieri2024learning,galimberti2025parametrizations}.
To train such controllers, one needs a finite-dimensional parametrization of a suitable subclass of stabilizing controllers. This motivates \emph{free parametrizations}: explicit maps from unconstrained finite-dimensional variables to dynamic operators satisfying the required stability or gain condition by construction. Controller training can then be performed through unconstrained gradient-based optimization, without projections or explicit stability constraints. Several such constructions have been developed for centralized operators~\cite{manchester2026neural,revay2023recurrent,massai2025free}. 

For distributed controllers, local stability or gain guarantees are not sufficient. The interconnection must preserve the desired network-level property, while the controller must respect the communication graph. Previous work has addressed these two requirements in different ways. Free parametrizations of interconnected neural operators have been developed by composing local gain bounds into a network-level condition~\cite{massai2024unconstrained,saccani2024optimal}. The GNN-based construction of~\cite{cao2025scaling} instead uses a common graph-based policy across the subsystems, allowing a policy trained on a smaller network to be deployed on larger networks and unseen topologies within the same problem class. Port-Hamiltonian approaches obtain compositional guarantees by exploiting power-preserving system interconnections or by imposing port-Hamiltonian structure on distributed controllers with prescribed gain bounds~\cite{furieri2022distributed,zakwan2026neural}.

In practice, networked control systems need not have a fixed configuration. Applications such as microgrids and multi-robot systems may combine heterogeneous subsystems whose composition and interconnection change over time. Plug-and-play (PnP) control addresses this setting by allowing subsystems to be added or removed while preserving closed-loop guarantees. Model-based PnP control has been studied through robust MPC and distributed invariance arguments~\cite{riverso2013plug,zeilinger2013plug,aboudonia2022reconfigurable,bodmer2026anytime}, as well as through dissipativity, passivity, and compositional certificates~\cite{tucci2017line,strehle2020scalable,nahata2020passivity}. These methods provide stability guarantees for baseline local controllers after network changes. However, when a neural feedback law is added to improve performance, the original PnP stability guarantee does not automatically extend to the augmented closed loop.

In this paper, we address this problem by designing distributed neural controllers for heterogeneous networked systems subject to PnP operations. We construct free parametrizations of an interconnected IMC operator that enforce a prescribed network gain for every value of the trainable variables, and characterize how this gain can be restored after admissible topology changes.

\noindent\emph{Contributions.}
The contributions are threefold.
(i) We derive two free parametrizations of interconnected neural
operators satisfying a prescribed network $\ell_2$ gain. The
\emph{neighbor-based} construction uses coupling information associated
with each subsystem and its neighbors, whereas the \emph{network-norm}
construction uses a bound on the coupling matrix of the overall network
and can reduce conservatism at the price of requiring network-wide information.
(ii) We characterize post-PnP recertification. For the neighbor-based parametrization, we give conditions under which only subsystems whose local coupling bounds change need to update their gain parameters. The network-norm parametrization instead generally requires updating a network-wide bound, unless a common bound for a known family of admissible PnP configurations is available in advance.
(iii) We show that, under strictly causal local realizations and separable finite-horizon costs, controller training can be implemented through neighbor-to-neighbor communication while preserving the prescribed network gain at every training iterate.

In~\cite{saccani2024optimal}, a preliminary version of this work, (i) only a neighbor-based free parametrization for a fixed network topology was considered, (ii) PnP operations and post-change recertification were not addressed, (iii) a neighbor-to-neighbor implementation of controller training was not established, and (iv) the detailed proofs of the main results were skipped.

The paper is organized as follows.
Section~\ref{sec:prelim} introduces the distributed system, the PnP
setting, and the control problems.
Section~\ref{sec:proposed} presents the distributed neural architecture,
the two free parametrizations, and the PnP recertification result.
Section~\ref{sec:disttrain} shows how training can be performed through
neighbor-to-neighbor computations and discusses post-PnP adaptation.
Section~\ref{sec:num} reports the numerical experiments, and
Section~\ref{sec:conclusions} concludes the paper.

\subsection*{Notation}
Let $\ell^n$ denote the space of sequences
$\mathbf a=(a_0,a_1,\ldots)$ with $a_t\in\R^n$.
Bold symbols denote sequences, with
$\|\mathbf a\|^2:=\sum_{t=0}^{\infty}\|a_t\|^2$, where $\|a_t\|$ denotes the Euclidean norm, and
$\ell_2^n:=\{\mathbf a\in\ell^n:\|\mathbf a\|<\infty\}$.
For matrices, $\|A\|_2$ denotes the spectral norm.
For an operator $\mathbf A:\ell^n\to\ell^m$, $(\mathbf A(\mathbf a))_t\in\R^m$ denotes its output at time $t$.
The operator is \emph{causal} if $(\mathbf A(\mathbf a))_t$ depends only on $a_{0:t}$.
We write $\mathbf A\in\mathcal L_2$ if $\mathbf A$ is causal and maps $\ell_2^n$ to $\ell_2^m$, and call it finite-gain $\ell_2$-stable if there exist $\gamma,\beta\ge0$ such that $\|\mathbf A(\mathbf a)\|\le\gamma\|\mathbf a\|+\beta$ for all $\mathbf a\in\ell_2^n$.
A function $\alpha:\R_{\ge0}\to\R_{\ge0}$ belongs to $\mathcal K$ if it is continuous, strictly increasing, and satisfies $\alpha(0)=0$.
For vectors $a_i$, $\col_i(a_i)$ denotes their column-wise stacking, and $\blkdiag(A_1,\ldots,A_N)$ denotes the block-diagonal matrix or operator with diagonal blocks $A_1,\ldots,A_N$.

\section{Plug-and-play networks and problem formulation}\label{sec:prelim}
We consider $N$ nonlinear time-varying subsystems coupled over an
undirected connected graph $\cG=(\cV,\cE)$, with
$\cV=\{1,\ldots,N\}$. For each $i\in\cV$, let
$\cN_i:=\{i\}\cup\{j\in\cV:\{i,j\}\in\cE\}$ denote its neighborhood.
The graph also specifies the allowed neighbor-to-neighbor communication.
The dynamics of subsystem $i$ are
\begin{equation}\label{eq:nonlinsyslocal}
x_{t+1}^{[i]}
=
f_t^{[i]}\!\left(x_t^{[\cN_i]},u_t^{[i]}\right)
+
w_t^{[i]},
\qquad t\ge0,
\end{equation}
where $x_t^{[i]}\in\R^{n_i}$, $u_t^{[i]}\in\R^{m_i}$, and
$w_t^{[i]}\in\R^{n_i}$. 
Throughout the paper, \textit{local} refers to quantities associated
with a single subsystem or its local controller, while \textit{neighbor-to-neighbor}
refers explicitly to information exchanged over the graph
$\mathcal G$ in one hop.
By stacking all local variables as
$x_t=\col_{i\in\cV}(x_t^{[i]})$, $u_t=\col_{i\in\cV}(u_t^{[i]})$,
and $w_t=\col_{i\in\cV}(w_t^{[i]})$, the network dynamics read
\begin{equation}\label{eq:nonlinsys}
x_{t+1}=f_t(x_t,u_t)+w_t .
\end{equation}
Let $\F:\ell^{n}\times\ell^{m}\to\ell^{n}$ be the strictly causal
operator induced by the sequence $\{f_t\}_{t\ge0}$, namely
\[
(\F(\x,\uvec))_0=0,\qquad
(\F(\x,\uvec))_{t+1}=f_t(x_t,u_t).
\]
For a given initial condition $x_0$, define
$\evec:=\mathbf E(x_0)+\mathbf S\w$, where
$(\mathbf E(x_0))_0=x_0$, $(\mathbf E(x_0))_{t>0}=0$,
$(\mathbf S\w)_0=0$, and $(\mathbf S\w)_{t+1}=w_t$, $t\geq0$. Then
\eqref{eq:nonlinsys} can be written compactly as
\begin{equation}\label{eq:plant-operator}
\x=\F(\x,\uvec)+\evec .
\end{equation}
When the physical system has a nonzero operating equilibrium, we
interpret $x_t$ as the deviation from the equilibrium associated with
the current network configuration. Hence, the origin in the following
stability statements represents this operating equilibrium.
The corresponding local operators $\F^{[i]}$ are induced by the sequences $\{f_t^{[i]}\}_{t\ge0}$, so that
\begin{equation}\label{eq:sub-operator}
\x^{[i]}
=
\F^{[i]}(\x^{[\cN_i]},\uvec^{[i]})
+
\mathbf E^{[i]}(x_0^{[i]})
+
\mathbf S\w^{[i]} .
\end{equation}
For every initial condition $x_0$ and every input-disturbance pair $(\uvec,\w)\in\ell_2^m\times\ell_2^n$, the equation~\eqref{eq:plant-operator} admits a unique causal state trajectory $\x$. Thus, the plant induces the unique transition operator $\Fb_{x_0}:(\uvec,\w)\mapsto \x$. 

A PnP operation models the physical addition or removal of subsystems at time $t_{\mathrm p}$. We represent such an event as a transition from a pre-change configuration $(\cG,\F)$ to a post-change configuration $(\cG^+,\F^+)$. Here, $\cG^+=(\cV^+,\cE^+)$ denotes the post-change communication graph, while $\F^+$ is the operator induced by the post-change dynamics
$x_{t+1}=f_t^+(x_t,u_t)+w_t$, $t\ge t_{\mathrm p}$.
Let $x_{t_{\mathrm p}}^+$ denote the post-change initial state obtained
by expressing the physical state at $t_{\mathrm p}$ in coordinates
centered at the post-change equilibrium. We denote by
$\Fb^+_{x_{t_{\mathrm p}}^+}:(\uvec,\w)\mapsto\x$ the corresponding
causal post-change plant map.
Standard PnP designs typically include an admission mechanism that
determines whether a topology change can be accepted by the baseline
controller. We call the PnP operations accepted by this mechanism
\emph{admissible} and make the following assumption on the resulting
pre-stabilized plant maps.

\begin{assumption}[Baseline PnP stability]\label{ass1:prestabilization} 
The baseline physical network is pre-stabilized. Specifically, for every initial condition $x_0$, the pre-change plant operator $\Fb_{x_0}$ is $\ell_2$-stable. Furthermore, every admissible PnP operation preserves this baseline stability: for any post-change pair $(\cG^+,\F^+)$ and any $x_{t_{\mathrm p}}^+$,
the post-change map $\Fb^+_{x_{t_{\mathrm p}}^+}$ is $\ell_2$-stable.
\end{assumption}
\begin{assumption}[Disturbance sampling]\label{ass2:disturbance}
The initial condition $x_0$ and disturbance sequence $\w$ are sampled from distributions $\mathcal D_0$ and $\mathcal D$, respectively. These distributions are unknown, but can be sampled.
\end{assumption}

Assumption~\ref{ass1:prestabilization} establishes the core setting of this paper: the physical network is already equipped with distributed primary controllers designed to ensure stability, both nominally and after admissible PnP operations, as in standard model-based PnP control schemes~\cite{zeilinger2013plug,riverso2013plug,nahata2020passivity,sadabadi2020line,strehle2020passivity}. 
Hence, throughout the paper, ``plant'' strictly refers to the map seen by the boosting layer after all baseline PnP feedback controllers have been absorbed into the operator; Assumption~\ref{ass1:prestabilization} does not require that the uncontrolled physical network is stable.

\subsection{Distributed controllers and PnP constraints}

We consider dynamical causal policies of the form
$\uvec=\K(\x,\uvec)$, where the operator $\K$ is strictly causal in its second
argument. Since dynamic distributed controllers may exchange internal
controller variables in addition to physical measurements, we define
distributed implementability through the information exchanged between
subsystems.

A causal operator is said to be \emph{$\cG$-distributed} if it can be
implemented by local dynamical systems, one per subsystem, that use
local measurements and exchange internal signals only with neighboring
subsystems. Local internal states and neighbor-to-neighbor signals are
allowed, so information may propagate over multiple hops dynamically
over time, while direct communication between non-neighboring
subsystems is excluded.
Accordingly, a policy $\K$ is called $\cG$-distributed if its controller
operator is $\cG$-distributed in the above sense.
The set of distributed controllers preserving closed-loop stability is therefore
\begin{multline*}
\mathcal K_{\mathrm{dist}}^{\ell_2}(\cG,\F)
:=
\Big\{
\K \ \Big|\
\K\ \text{is $\cG$-distributed and }\\
(\x,\uvec)\in\ell_2^n\times\ell_2^m, \ \forall  x_0\in\mathbb R^n, \  \forall \w\in\ell_2^n,
\Big\}.
\end{multline*}
A key feature of PnP control is that after a topology change, controller
updates should be restricted to the subsystems directly affected by the
change.
\begin{definition}[PnP update constraints]\label{def:pnp_update}
Consider a PnP transition from $(\cG,\F)$ to $(\cG^+,\F^+)$. The set of post-change subsystems allowed to update their local controllers is
\begin{multline}\label{eq:affected_set}
\cA(\cG,\cG^+)
:=
(\cV^+\setminus\cV)
\ \cup \\
\big\{
i\in\cV\cap\cV^+
\,\big|\,
\cN_i^+\neq\cN_i
\ \text{or}\
\F^{+[i]}\not\equiv\F^{[i]}
\big\}.
\end{multline}
Given a pre-change controller $\K=\operatorname{col}_{i\in\cV}\K^{[i]}$, a post-change controller $\K^+=\operatorname{col}_{i\in\cV^+}\K^{+[i]}$ satisfies the PnP update constraints if
\begin{equation}\label{eq:pnp_update_constraint}
\K^{+[i]}\equiv \K^{[i]},
\qquad
\forall i\in(\cV\cap\cV^+)\setminus\cA(\cG,\cG^+).
\end{equation}
\end{definition}
Thus, local controller updates are permitted only for newly added subsystems, and for surviving subsystems that experience a change in either their local pre-stabilized dynamics or their neighborhood connections. All other subsystems must retain their existing local controllers strictly unchanged. Figure~\ref{fig:pnp_update} illustrates a PnP transition and its corresponding affected set.

\begin{figure}[t!]
    \centering
\tikzset{every picture/.style={line width=0.75pt}} 

\begin{tikzpicture}[x=0.60pt,y=0.60pt,yscale=-1,xscale=1]

\draw   (60,50) .. controls (60,38.4) and (69.4,29) .. (81,29) .. controls (92.6,29) and (102,38.4) .. (102,50) .. controls (102,61.6) and (92.6,71) .. (81,71) .. controls (69.4,71) and (60,61.6) .. (60,50) -- cycle ;
\draw   (170,50) .. controls (170,38.4) and (179.4,29) .. (191,29) .. controls (202.6,29) and (212,38.4) .. (212,50) .. controls (212,61.6) and (202.6,71) .. (191,71) .. controls (179.4,71) and (170,61.6) .. (170,50) -- cycle ;
\draw   (280,50) .. controls (280,38.4) and (289.4,29) .. (301,29) .. controls (312.6,29) and (322,38.4) .. (322,50) .. controls (322,61.6) and (312.6,71) .. (301,71) .. controls (289.4,71) and (280,61.6) .. (280,50) -- cycle ;
\draw   (169,140) .. controls (169,128.4) and (178.4,119) .. (190,119) .. controls (201.6,119) and (211,128.4) .. (211,140) .. controls (211,151.6) and (201.6,161) .. (190,161) .. controls (178.4,161) and (169,151.6) .. (169,140) -- cycle ;
\draw    (102,50) -- (170,50) ;
\draw    (212,50) -- (280,50) ;
\draw    (190,119) -- (190,83) ;
\draw [shift={(190,80)}, rotate = 90] [fill={rgb, 255:red, 0; green, 0; blue, 0 }  ][line width=0.08]  [draw opacity=0] (10.72,-5.15) -- (0,0) -- (10.72,5.15) -- (7.12,0) -- cycle    ;
\draw    (239,31) .. controls (272,18) and (220,80) .. (250,70) ;

\draw (71,42.4) node [anchor=north west][inner sep=0.75pt]    {$\Sigma _{1}$};
\draw (182,41.4) node [anchor=north west][inner sep=0.75pt]    {$\Sigma _{2}$};
\draw (291,42.4) node [anchor=north west][inner sep=0.75pt]    {$\Sigma _{3}$};
\draw (181,132.4) node [anchor=north west][inner sep=0.75pt]    {$\Sigma _{4}$};

\end{tikzpicture}
    \caption{Example of a PnP update from $\cG$ to $\cG^+$. At time $t_{\mathrm p}$ subsystem
$\Sigma_3$ is removed, while $\Sigma_4$ is plugged in and connected to
$\Sigma_2$. The affected post-change set is
$\cA(\cG,\cG^+)=\{2,4\}$.}
    \label{fig:pnp_update}
\end{figure}

\subsection{Performance boosting and post-change update problems}
\label{subsec:problem_formulation}
The goal of performance boosting is to design an auxiliary policy that minimizes a task-dependent, possibly non-convex finite-horizon cost $J_T:\mathbb R^{n(T+1)}\times\mathbb R^{mT}\to\mathbb R$ without compromising the closed-loop stability guaranteed by the baseline controller. For a given $\F$ and $\K$, define the expected cost
\[
\bar J_T(\F,\K)
:=
\mathbb E_{x_0\sim\mathcal D_0,\,\w\sim\mathcal D}
\!\left[J_T(x_{0:T},u_{0:T-1})\right],
\]
where $(x_{0:T},u_{0:T-1})$ is the closed-loop trajectory generated by $x_0$, $w_{0:T-1}$, and $\K$. 

The performance boosting problem is
\begin{subequations}\label{eq:prob_design}
\begin{align}
\min_{\K}\quad
& \bar J_T(\F,\K) \label{eq:prob_design_obj}\\
\text{s.t.}\quad
& \K\in\mathcal K_{\mathrm{dist}}^{\ell_2}(\cG,\F).
\label{eq:prob_design_stab}
\end{align}
\end{subequations}
After a PnP operation producing $(\cG^+,\F^+)$, our primary objective
is instead to restore feasibility while modifying only the affected
controllers:
\begin{subequations}\label{eq:prob_pnp_update}
\begin{align}
\text{find}\quad & \K^+ \nonumber \\
\text{s.t.}\quad
& \K^+\in\mathcal K_{\mathrm{dist}}^{\ell_2}(\cG^+,\F^+),
\label{eq:prob_pnp_update_stab}\\
& \K^{+[i]}\equiv\K^{[i]},
\quad
\forall i\in(\cV\cap\cV^+)\setminus\cA(\cG,\cG^+).
\label{eq:prob_pnp_update_local}
\end{align}
\end{subequations}
If further performance improvement is desired, a subset of the affected
controllers can subsequently be re-optimized while the remaining local controllers are kept fixed. Both problems involve stability constraints on infinite-dimensional
closed-loop maps, which standard neural-network parametrizations do not
enforce: unconstrained parameter updates may therefore destabilize the
network.

\section{Distributed Performance Boosting}\label{sec:proposed}

To address \eqref{eq:prob_design}--\eqref{eq:prob_pnp_update}, we
realize the IMC operator as an interconnection of local neural operators
and develop free parametrizations enforcing a prescribed network gain,
together with their PnP recertification.

\subsection{Nominal IMC reformulation and robust extension}
\label{subsec:imc_reformulation}

Consider the plant equation \eqref{eq:plant-operator}. In the IMC architecture~\cite{furieri2024learning}, the controller reconstructs the residual from the measured state history and past inputs as
\begin{equation}\label{eq:innovation_reconstructed}
    \hat{\evec}=\x-\F(\x,\uvec).
\end{equation}
We then introduce a causal auxiliary operator $\mathbf R:\ell^n\to\ell^m$ and define the control action as
\begin{equation}\label{eq:imc_law}
    \uvec=\mathbf R(\hat{\evec}).
\end{equation}
Combining \eqref{eq:innovation_reconstructed} and \eqref{eq:imc_law} gives the implicit feedback representation $\uvec=\mathbf R(\x-\F(\x,\uvec))$, which induces a causal policy $\uvec=\K_{\mathbf R}(\x,\uvec)$.

In the centralized setting, under the exact-model assumption,
$\hat{\evec}=\evec$, if
$\mathbf R\in\mathcal L_2$, the closed loop
\eqref{eq:plant-operator}--\eqref{eq:imc_law} is
$\ell_2$-stable. Conversely, any stabilizing centralized policy can be
represented by a suitable operator $\mathbf R\in\mathcal L_2$~\cite{furieri2024learning}. 

The same characterization applies to
$\cG$-distributed policies under the distributed implementation notion introduced
in Section~\ref{sec:prelim}. Indeed, in the centralized converse, $\mathbf R$ is
chosen as the closed-loop map from $\evec$ to $\uvec$ induced by the
stabilizing policy. If this policy is $\cG$-distributed, this map is
also $\cG$-distributed, since it can be implemented by the local plant
dynamics interconnected with the distributed controller. This observation concerns distributed implementability rather than
static sparsity of the input--output map; imposing the latter directly
on a Youla-type parameter generally requires additional invariance
properties~\cite{rotkowitz2005characterization,furieri2020sparsity}.

However, this stability guarantee is strictly \emph{nominal}, meaning
that it assumes the model used inside the controller exactly matches the
true pre-stabilized plant operator $\F$. Under this assumption,
$\hat{\evec}=\evec$. In practice, the controller relies on a nominal
model $\widehat\F$, which may differ from the true plant. We model this
mismatch as
\begin{equation}\label{eq:model_mismatch}
    \x
    =
    \widehat\F(\x,\uvec)
    +
    \boldsymbol{\Delta}(\x,\uvec)
    +
    \mathbf E(x_0)+\mathbf S\w ,
\end{equation}
where
$\boldsymbol{\Delta}:\ell^n\times\ell^m\to\ell^n$
is an $\mathcal L_2$ strictly causal operator capturing the unmodeled
dynamics. Building on the small-gain arguments of
\cite{furieri2024learning}, the following proposition provides a robust
IMC condition that explicitly accounts for nonzero initial conditions.

\begin{proposition}
\label{prop:robust_imc}
Suppose there exist constants
$\gamma_{\F},\gamma_{\boldsymbol{\Delta}}>0$ and a function
$\beta_F\in\mathcal K$ such that, for every
$x_0\in\R^n$ and all $\uvec,\w\in\ell_2$,
the true plant map and the mismatch satisfy
\begin{align}
    \|\x\|
    &\le
    \gamma_{\F}
    \bigl(\|\uvec\|+\|\w\|\bigr)
    +
    \beta_F(\|x_0\|),
    \label{eq:true_plant_gain}\\
    \|\boldsymbol{\Delta}(\x,\uvec)\|
    &\le
    \gamma_{\boldsymbol{\Delta}}
    \bigl(\|\x\|+\|\uvec\|\bigr).
    \label{eq:mismatch_gain}
\end{align}
Let $\mathbf R\in\mathcal L_2$ satisfy
\[
\|\mathbf R(\mathbf r)\|
\le
\gamma_{\mathbf R}\|\mathbf r\|+\beta_R
\]
for some $\gamma_{\mathbf R}>0$ and $\beta_R\ge0$. If
\begin{equation}\label{eq:robust_imc_condition}
\gamma_{\boldsymbol{\Delta}}
(\gamma_{\F}+1)\gamma_{\mathbf R}<1,
\end{equation}
then the closed loop defined by
$\uvec=\mathbf R(\x-\widehat\F(\x,\uvec))$
is finite-gain $\ell_2$-stable, with an additive bound that may depend
on $x_0$ and $\beta_R$.
\end{proposition}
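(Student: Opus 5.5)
The plan is a small-gain argument on the reconstructed residual. By \eqref{eq:model_mismatch}, the controller's residual satisfies
\[
\hat{\evec}:=\x-\widehat\F(\x,\uvec)=\boldsymbol{\Delta}(\x,\uvec)+\mathbf E(x_0)+\mathbf S\w .
\]
Hence the closed loop is the feedback interconnection of three blocks:
\begin{itemize}
\item the true plant map, from $(\uvec,\w)$ to $\x$, with the bound \eqref{eq:true_plant_gain};
\item the mismatch $\boldsymbol{\Delta}$, with the bound \eqref{eq:mismatch_gain};
\item the operator $\mathbf R$, taking $\hat{\evec}$ to $\uvec$.
\end{itemize}
The exogenous inputs are $\w$ and $x_0$. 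Note that $\|\mathbf E(x_0)+\mathbf S\w\|\le\|x_0\|+\|\w\|$.

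\textbf{Step 1 (a priori bound).} Assume first that all signals are in $\ell_2$. Chaining the three bounds gives
\[
\|\uvec\|\le \gamma_{\mathbf R}\|\hat{\evec}\|+\beta_R
\le \gamma_{\mathbf R}\Bigl(\gamma_{\boldsymbol{\Delta}}(\|\x\|+\|\uvec\|)+\|x_0\|+\|\w\|\Bigr)+\beta_R .
\]
Substituting $\|\x\|\le\gamma_{\F}(\|\uvec\|+\|\w\|)+\beta_F(\|x_0\|)$ yields
\[
\bigl(1-\gamma_{\boldsymbol{\Delta}}(\gamma_{\F}+1)\gamma_{\mathbf R}\bigr)\|\uvec\|
\le \gamma_{\mathbf R}\bigl(\gamma_{\boldsymbol{\Delta}}\gamma_{\F}+1\bigr)\|\w\|
+\gamma_{\mathbf R}\bigl(\gamma_{\boldsymbol{\Delta}}\beta_F(\|x_0\|)+\|x_0\|\bigr)+\beta_R .
\]
By \eqref{eq:robust_imc_condition} the coefficient on the left is positive. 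Dividing by it gives a finite-gain bound on $\|\uvec\|$ in terms of $\|\w\|$, plus an additive term depending on $x_0$ and $\beta_R$. Plugging this back into \eqref{eq:true_plant_gain} gives a bound of the same form on $\|\x\|$, and hence on $\|(\x,\uvec)\|$.

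\textbf{Step 2 (removing the assumption $\x,\uvec\in\ell_2$).} This is the main obstacle. Step 1 takes norms of signals that are only known to lie in the extended space $\ell^n$, so a priori they could be infinite. I would handle it with the usual truncation argument. For a horizon $\tau$, let $P_\tau$ zero out samples beyond $\tau$.
\begin{itemize}
\item Causality of $\mathbf R$, $\widehat\F$ and $\boldsymbol{\Delta}$, together with strict causality of the plant and of $\boldsymbol{\Delta}$, means that truncated closed-loop signals are generated by truncated inputs. The closed loop is well posed because it is strictly causal around the loop: $u_t$ depends on $\hat e_{0:t}$, which depends on $x_{0:t}$ and $u_{0:t-1}$.
\item I therefore need the bounds \eqref{eq:true_plant_gain}, \eqref{eq:mismatch_gain} and the bound on $\mathbf R$ to hold for truncated signals, i.e.\ $\|P_\tau\x\|\le\gamma_{\F}(\|P_\tau\uvec\|+\|P_\tau\w\|)+\beta_F(\|x_0\|)$, and similarly for the other two.
\item For the plant, this follows by applying \eqref{eq:true_plant_gain} to the $\ell_2$ inputs $P_\tau\uvec,P_\tau\w$ and using causality, $P_\tau\x=P_\tau\Fb_{x_0}(P_\tau\uvec,P_\tau\w)$. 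The same reasoning works for $\boldsymbol{\Delta}$ and $\mathbf R$, since both are $\mathcal L_2$ and causal.
\end{itemize}
Repeating Step 1 with $P_\tau$ then gives bounds on $\|P_\tau\x\|$ and $\|P_\tau\uvec\|$ that are uniform in $\tau$. Letting $\tau\to\infty$ shows $(\x,\uvec)\in\ell_2$ and yields the stated finite-gain estimate.

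The subtle point in Step 2 is that the gain bounds are stated for inputs in $\ell_2$, not for truncations. I expect causality to transfer them cleanly, but this is where care is needed.
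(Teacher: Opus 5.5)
Your proposal is correct and is essentially the paper's proof: write $\hat{\evec}=\boldsymbol\Delta(\x,\uvec)+\mathbf E(x_0)+\mathbf S\w$, chain the plant, mismatch and $\mathbf R$ bounds on truncated horizons, absorb the loop term via \eqref{eq:robust_imc_condition}, and let the horizon grow. The paper solves for $\|\hat{\evec}\|_T$ instead of $\|\uvec\|$, which is immaterial, and it merely asserts the truncated bounds by causality. One refinement to your Step 2: \eqref{eq:mismatch_gain} is assumed only along plant trajectories, so apply it to $\bigl(\Fb_{x_0}(P_\tau\uvec,P_\tau\w),P_\tau\uvec\bigr)$, which agrees with $(\x,\uvec)$ up to time $\tau$, rather than to $(P_\tau\x,P_\tau\uvec)$; bounding $\|\Fb_{x_0}(P_\tau\uvec,P_\tau\w)\|$ by \eqref{eq:true_plant_gain} then closes the same chain.
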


The proof is reported in Appendix~\ref{app:proof_robust_imc}.
Proposition~\ref{prop:robust_imc} shows that robustness to model
mismatch depends explicitly on the gain of the auxiliary operator
$\mathbf R$. We therefore prescribe a target gain
$\gamma_{\mathbf R}>0$ and define
\begin{equation*}
\begin{aligned}
\mathcal R_{\gamma_{\mathbf R}}^{\ell_2}(\cG)
:=
\Big\{\mathbf R\in\mathcal L_2\ & \Big|\;
\mathbf R\text{ is $\cG$-distributed},\ \exists\,\beta_R\ge0:\\
&\|\mathbf R(\mathbf r)\|
\le\gamma_{\mathbf R}\|\mathbf r\|+\beta_R,
\  \forall\,\mathbf r\in\ell_2^n
\Big\}.
\end{aligned}
\end{equation*}
The choice of $\gamma_{\mathbf R}$ trades robustness against controller
flexibility: smaller values increase the margin in
\eqref{eq:robust_imc_condition}, but restrict the admissible
performance-boosting operators. At this stage, $\beta_R$ is simply an
additive term in the input--output bound. For the dynamic state-space
realizations introduced below, it accounts for a possibly nonzero
controller initial state without affecting
\eqref{eq:robust_imc_condition}.

For a prescribed $\gamma_{\mathbf R}$ satisfying~\eqref{eq:robust_imc_condition}, the design problem is reduced to searching over
$\mathcal R_{\gamma_{\mathbf R}}^{\ell_2}(\cG)$. This characterization is
still infinite-dimensional, and the prescribed gain is a constraint on
the complete interconnected operator. Directly parametrizing a neural
controller would therefore require enforcing this constraint during
training, for instance through constrained optimization or projection. \textit{Free parametrizations} provide an optimization-oriented alternative:
unconstrained finite-dimensional variables are mapped to dynamic
operators satisfying the required gain condition for every parameter
value~\cite{furieri2024learning,manchester2026neural,massai2025free}.
In the distributed setting, however, freely parametrizing each local
operator is not sufficient by itself, since their interconnection must
also satisfy the prescribed gain of the overall operator $\mathbf R$.

\begin{figure}
    \centering
\tikzset{every picture/.style={line width=0.75pt}} 
\begin{tikzpicture}[x=0.75pt,y=0.75pt,yscale=-0.8,xscale=0.83, thick,scale=0.85, every node/.style={scale=0.85}]

\draw   (44,95) -- (153,95) -- (153,125) -- (44,125) -- cycle ;
\draw    (152.67,110.17) -- (185,110.01) ;
\draw [shift={(188,110)}, rotate = 179.73] [fill={rgb, 255:red, 0; green, 0; blue, 0 }  ][line width=0.08]  [draw opacity=0] (10.72,-5.15) -- (0,0) -- (10.72,5.15) -- (7.12,0) -- cycle    ;
\draw   (188,110) .. controls (188,104.48) and (192.48,100) .. (198,100) .. controls (203.52,100) and (208,104.48) .. (208,110) .. controls (208,115.52) and (203.52,120) .. (198,120) .. controls (192.48,120) and (188,115.52) .. (188,110) -- cycle ;
\draw    (198.17,62.5) -- (198.01,97) ;
\draw [shift={(198,100)}, rotate = 270.25] [fill={rgb, 255:red, 0; green, 0; blue, 0 }  ][line width=0.08]  [draw opacity=0] (10.72,-5.15) -- (0,0) -- (10.72,5.15) -- (7.12,0) -- cycle    ;
\draw   (246,165) -- (293,165) -- (293,189.5) -- (246,189.5) -- cycle ;
\draw    (320,177) -- (297,177) ;
\draw [shift={(294,177)}, rotate = 360] [fill={rgb, 255:red, 0; green, 0; blue, 0 }  ][line width=0.08]  [draw opacity=0] (10.72,-5.15) -- (0,0) -- (10.72,5.15) -- (7.12,0) -- cycle    ;
\draw   (320,177) .. controls (320,171.48) and (324.48,167) .. (330,167) .. controls (335.52,167) and (340,171.48) .. (340,177) .. controls (340,182.52) and (335.52,187) .. (330,187) .. controls (324.48,187) and (320,182.52) .. (320,177) -- cycle ;
\draw    (360.5,177) -- (343,177) ;
\draw [shift={(340,177)}, rotate = 360] [fill={rgb, 255:red, 0; green, 0; blue, 0 }  ][line width=0.08]  [draw opacity=0] (10.72,-5.15) -- (0,0) -- (10.72,5.15) -- (7.12,0) -- cycle    ;
\draw   (360.5,162) -- (466.75,162) -- (466.75,192) -- (360.5,192) -- cycle ;
\draw    (29.67,100.5) -- (41,100.5) ;
\draw [shift={(44,100.5)}, rotate = 180] [fill={rgb, 255:red, 0; green, 0; blue, 0 }  ][line width=0.08]  [draw opacity=0] (10.72,-5.15) -- (0,0) -- (10.72,5.15) -- (7.12,0) -- cycle    ;
\draw    (29,120) -- (42,120) ;
\draw [shift={(45,120)}, rotate = 180] [fill={rgb, 255:red, 0; green, 0; blue, 0 }  ][line width=0.08]  [draw opacity=0] (10.72,-5.15) -- (0,0) -- (10.72,5.15) -- (7.12,0) -- cycle    ;
\draw    (330,110.33) -- (330,164) ;
\draw [shift={(330,167)}, rotate = 270] [fill={rgb, 255:red, 0; green, 0; blue, 0 }  ][line width=0.08]  [draw opacity=0] (10.72,-5.15) -- (0,0) -- (10.72,5.15) -- (7.12,0) -- cycle    ;
\draw    (483,165.5) -- (470,165.91) ;
\draw [shift={(467,166)}, rotate = 358.21] [fill={rgb, 255:red, 0; green, 0; blue, 0 }  ][line width=0.08]  [draw opacity=0] (10.72,-5.15) -- (0,0) -- (10.72,5.15) -- (7.12,0) -- cycle    ;
\draw    (483,54) -- (483,165.5) ;
\draw    (30.17,176.17) -- (245.67,177) ;
\draw    (30.13,54.25) -- (30.13,100.75) ;
\draw    (30.13,54.25) -- (483,54) ;
\draw    (30.17,119.67) -- (30.17,176.17) ;
\draw    (208,110) -- (354,110.49) ;
\draw [shift={(357,110.5)}, rotate = 180.19] [fill={rgb, 255:red, 0; green, 0; blue, 0 }  ][line width=0.08]  [draw opacity=0] (10.72,-5.15) -- (0,0) -- (10.72,5.15) -- (7.12,0) -- cycle    ;
\draw    (317.33,76.17) -- (353.33,76.17) ;
\draw [shift={(356.33,76.17)}, rotate = 180] [fill={rgb, 255:red, 0; green, 0; blue, 0 }  ][line width=0.08]  [draw opacity=0] (10.72,-5.15) -- (0,0) -- (10.72,5.15) -- (7.12,0) -- cycle    ;
\draw    (360.67,90.17) -- (482.5,90) ;
\draw  [color={rgb, 255:red, 0; green, 59; blue, 255 }  ,draw opacity=1 ][dash pattern={on 3.75pt off 3pt on 7.5pt off 1.5pt}] (221,140.67) -- (489,140.67) -- (489,216.17) -- (221,216.17) -- cycle ;
\draw  [color={rgb, 255:red, 0; green, 59; blue, 255 }  ,draw opacity=1 ][dash pattern={on 3.75pt off 3pt on 7.5pt off 1.5pt}] (24.75,33.67) -- (217.33,33.67) -- (217.33,134.67) -- (24.75,134.67) -- cycle ;
\draw    (483,185.67) -- (470.33,185.8) ;
\draw [shift={(467.33,185.83)}, rotate = 359.39] [fill={rgb, 255:red, 0; green, 0; blue, 0 }  ][line width=0.08]  [draw opacity=0] (10.72,-5.15) -- (0,0) -- (10.72,5.15) -- (7.12,0) -- cycle    ;
\draw    (233,207.5) -- (483,208.33) ;
\draw    (233,207.5) -- (233,177.5) ;
\draw    (483,208.33) -- (483,185.67) ;
\draw  [fill={rgb, 255:red, 0; green, 0; blue, 0 }  ,fill opacity=1 ] (357.9,63.68) -- (360,63.67) -- (360.36,115.83) -- (358.26,115.84) -- cycle ;
\draw    (269.33,124.33) -- (269.02,161.5) ;
\draw [shift={(269,164.5)}, rotate = 270.48] [fill={rgb, 255:red, 0; green, 0; blue, 0 }  ][line width=0.08]  [draw opacity=0] (10.72,-5.15) -- (0,0) -- (10.72,5.15) -- (7.12,0) -- cycle    ;
\draw  [color={rgb, 255:red, 208; green, 2; blue, 27 }  ,draw opacity=1 ][dash pattern={on 3.75pt off 3pt on 7.5pt off 1.5pt}] (221,33.67) -- (489,33.67) -- (489,134.67) -- (221,134.67) -- cycle ;

\draw (44.5,99.23) node [anchor=north west][inner sep=0.75pt]  [font=\normalsize]  {$\mathbf{F}^{[i]}\left( \mathbf{x}^{[\mathcal{N}_{i}]} ,\mathbf{u}^{[i]}\right)$};
\draw (245.5,166) node [anchor=north west][inner sep=0.75pt]    {$\Erre^{[ i]}( \cdot )$};
\draw (164.67,59.9) node [anchor=north west][inner sep=0.75pt]    {$\mathbf{e}^{[i]}$};
\draw (295,150) node [anchor=north west][inner sep=0.75pt]    {$\widehat{\mathbf{e}}^{[i]}$};
\draw (240.83,87.9) node [anchor=north west][inner sep=0.75pt]    {$\mathbf{x}^{[ i]}$};
\draw (171.5,87.07) node [anchor=north west][inner sep=0.75pt]    {$+$};
\draw (202.17,76.4) node [anchor=north west][inner sep=0.75pt]    {$+$};
\draw (298.33,57.07) node [anchor=north west][inner sep=0.75pt]    {$\mathbf{x}^{[\mathcal{N}_{i} \backslash i]}$};
\draw (333.5,142.4) node [anchor=north west][inner sep=0.75pt]    {$+$};
\draw (344,181) node [anchor=north west][inner sep=0.75pt]    {$-$};
\draw (95.33,36.33) node [anchor=north west][inner sep=0.75pt]   [align=left] {\textcolor[rgb]{0.29,0.56,0.89}{{\fontfamily{ptm}\selectfont {\small Subsystem}}}};
\draw (363,135) node [anchor=north west][inner sep=0.75pt]   [align=left] {\begin{minipage}[lt]{61.6pt}\setlength\topsep{0pt}
\begin{center}
\textcolor[rgb]{0.29,0.56,0.89}{{\fontfamily{ptm}\selectfont {\small Local controller}}}
\end{center}

\end{minipage}};
\draw (194,157.4) node [anchor=north west][inner sep=0.75pt]    {$\mathbf{u}^{[i]}$};
\draw (235.5,35.33) node [anchor=north west][inner sep=0.75pt]   [align=left] {\begin{minipage}[lt]{151.68pt}\setlength\topsep{0pt}
\begin{center}
{\fontfamily{ptm}\selectfont {\small \textcolor[rgb]{0.82,0.01,0.11}{Neighbor-to-neighbor communication}}}
\end{center}
\end{minipage}};
\draw (360,165.65) node [anchor=north west][inner sep=0.75pt]  [font=\normalsize]  {$\mathbf{F}^{[i]}\left(\mathbf{x}^{[\mathcal{N}_{i}]},\mathbf{u}^{[i]}\right)$};
\draw (378.33,67.07) node [anchor=north west][inner sep=0.75pt]    {$\mathbf{x}^{[\mathcal{N}_{i}]}$};
\draw (272.33,115) node [anchor=north west][inner sep=0.75pt]    {$\mathbf{z}^{[\mathcal{N}_{i} \backslash i]}$};
\end{tikzpicture}
    \caption{Local IMC implementation at subsystem $i$. The residual
    $\hat{\evec}^{[i]}$ is reconstructed from the local model, while
    internal controller outputs from one-hop neighbors provide dynamic
    coordination. The complete neural interconnection realizes
    $\Erre\in\mathcal R_{\gamma_{\mathbf R}}^{\ell_2}(\cG)$.}
    \label{fig:localContr}
\end{figure}

\subsection{Distributed internal realization of $\mathbf R$}
\label{subsec:cert_param}

Imposing a distributed structure directly on the input--output map of
$\mathbf R$ can be restrictive. For instance, a block-diagonal
$\mathbf R$, as considered in~\cite{furieri2024learning}, is directly
implementable but prevents dynamic coordination between subsystems.
We instead realize $\mathbf R$ as an interconnection of local neural
operators $\mathbf R^{[i]}$, one associated with each subsystem, and
impose the communication structure on this internal realization.
Each local operator exchanges internal signals only with neighboring
subsystems, allowing information to propagate
through the network over time while preserving a neighbor-to-neighbor
implementation.

To construct this (see Fig.~\ref{fig:interconnectionREN}), let the external input $\hat{\evec}$, internal input $\mathbf{v}$, internal output $\mathbf{z}$, and control $\uvec$ have pointwise dimensions $n,q,r,$ and $m$, respectively, with $q=\sum_{i\in\cV}q_i$ and $r=\sum_{i\in\cV}r_i$. We consider a local operator
$\mathbf R^{[i]}:\ell^{q_i}\to\ell^{r_i}$ with state-space
realization
\begin{equation}\label{eq:nonlinsuboperlocal}
\begin{aligned}
\xi^{[i]}_{t+1}
&=
\omega^{[i]}(\xi^{[i]}_t,v^{[i]}_t;\theta^{[i]}),\\
z^{[i]}_t
&=
\chi^{[i]}(\xi^{[i]}_t,v^{[i]}_t;\theta^{[i]}),
\qquad t\ge0,
\end{aligned}
\end{equation}
where $\xi_t^{[i]}\in\R^{n_{\xi_i}}$ is the internal controller state
and $\theta^{[i]}\in\R^{p_i}$ collects the trainable parameters of the
local operator. The mappings $\omega^{[i]}$ and $\chi^{[i]}$ may
include neural networks parametrized by $\theta^{[i]}$; their
architectures and parameter dimensions $p_i$ may differ across
subsystems. 
We consider the following assumption on these local operators. 

\begin{assumption}\label{ass:local_diss}\textup{\textbf{(Free parametrization with gain bound)}}
For each $i\in\cV$ and any prescribed gain $\gamma^{[i]}>0$, the
parametrization of $\mathbf R^{[i]}$ is such that, for every
$\theta^{[i]}\in\R^{p_i}$, there exists a nonnegative storage function
$V^{[i]}$ with $V^{[i]}(0)=0$ satisfying $\forall t\ge0$
\begin{equation}\label{eq:local_dissipation}
V^{[i]}(\xi^{[i]}_{t+1})-V^{[i]}(\xi^{[i]}_t)
\le
(\gamma^{[i]})^2\|v^{[i]}_t\|^2-\|z^{[i]}_t\|^2.
\end{equation}
\end{assumption}
Here $\gamma^{[i]}$ bounds the gain of the local operator
$\mathbf R^{[i]}$ from $\mathbf v^{[i]}$ to $\mathbf z^{[i]}$,
whereas $\gamma_{\mathbf R}$ denotes the gain of the complete
interconnection from $\hat{\evec}$ to $\uvec$. For each prescribed $\gamma^{[i]}$, the realization
\eqref{eq:nonlinsuboperlocal} is obtained from the corresponding free
parametrization; we leave the dependence of $\omega^{[i]}$ and
$\chi^{[i]}$ on $\gamma^{[i]}$ implicit.
Assumption~\ref{ass:local_diss} means that, once a local gain
$\gamma^{[i]}$ is prescribed, the parameters $\theta^{[i]}$ can be
trained freely without violating that gain bound. This property is
provided by several free parametrizations of centralized stable neural
operators that admit state-space realizations of the form
\eqref{eq:nonlinsuboperlocal}, including Recurrent Equilibrium Networks
(RENs)~\cite{revay2023recurrent}, stable state-space
models~\cite{massai2025free,massai2025l2ru}, and dissipative
port-Hamiltonian architectures~\cite{zakwan2026neural}.

Our first step toward a distributed free parametrization is to use
these centralized constructions as certified local building blocks and
compose their gain guarantees across the network. To this end, we
associate each local operator with a nonnegative multiplier $\alpha^{[i]}\geq0$,
which weights the local inequality~\eqref{eq:local_dissipation}.
The pair $(\alpha^{[i]},\gamma^{[i]})$ will be selected below so that
the interconnected operator satisfies a desired global gain bound.

Stacking the local variables yields
$\mathbf z=\mathbf R_{\rm loc}(\mathbf v)$, where
$\mathbf R_{\rm loc}
:=
\blkdiag(\mathbf R^{[1]},\ldots,\mathbf R^{[N]})$. The global internal
interconnection is defined by
\begin{equation}\label{eq:interconn_R}
\mathbf v=M_{vz}\mathbf z+M_{ve}\hat{\evec},
\quad
\uvec=M_{uz}\mathbf z,
\quad
\mathbf z=\mathbf R_{\rm loc}(\mathbf v).
\end{equation}
The matrices $M_{ve}\in\mathbb{R}^{q\times n}$, $M_{vz}\in\mathbb{R}^{q\times r}$, and $M_{uz}\in\mathbb{R}^{m\times r}$ determine, respectively,
how reconstructed residuals are injected into the local operators, how their
outputs are exchanged, and how they are mapped to the control input.
Throughout, we require these matrices to respect the communication
graph: with a subsystem-wise partition,
\begin{equation}\label{eq:graph_compatibility}
M^{[ij]}\neq0
\ \Longrightarrow\
j\in\cN_i,
\quad
M\in\{M_{ve},M_{vz},M_{uz}\}.
\end{equation}
We call interconnection matrices satisfying
\eqref{eq:graph_compatibility} \emph{graph-compatible}.
This guarantees that the realization of $\mathbf R$ is
$\cG$-distributed. We assume throughout that the interconnection~\eqref{eq:interconn_R}
is well posed: for every time $t$ and admissible
$(\xi_t,\hat e_t)$, the algebraic equations induced by the
interconnection admit a unique solution for $(v_t,z_t,u_t)$.
This condition is automatically satisfied when the local operators are
strictly causal, since $z_t$ is independent of $v_t$. 

\begin{figure}
    \centering
\tikzset{every picture/.style={line width=0.75pt}} 
\begin{tikzpicture}[x=0.75pt,y=0.75pt,yscale=-1,xscale=0.8]
\draw   (241,41) -- (349.75,41) -- (349.75,120) -- (241,120) -- cycle ;
\draw   (280.57,130.43) -- (320.57,130.43) -- (320.57,169.43) -- (280.57,169.43) -- cycle ;
\draw    (349.4,80) -- (386.75,80) ;
\draw [shift={(389.75,80)}, rotate = 180] [fill={rgb, 255:red, 0; green, 0; blue, 0 }  ][line width=0.08]  [draw opacity=0] (10.72,-5.15) -- (0,0) -- (10.72,5.15) -- (7.12,0) -- cycle    ;
\draw    (368.71,149.86) -- (368.31,80.4) ;
\draw    (368.71,149.86) -- (324.75,149.99) ;
\draw [shift={(321.75,150)}, rotate = 359.83] [fill={rgb, 255:red, 0; green, 0; blue, 0 }  ][line width=0.08]  [draw opacity=0] (10.72,-5.15) -- (0,0) -- (10.72,5.15) -- (7.12,0) -- cycle    ;
\draw    (205.24,80.08) -- (236,80.01) ;
\draw [shift={(239,80)}, rotate = 179.87] [fill={rgb, 255:red, 0; green, 0; blue, 0 }  ][line width=0.08]  [draw opacity=0] (10.72,-5.15) -- (0,0) -- (10.72,5.15) -- (7.12,0) -- cycle    ;
\draw   (194.51,80.08) .. controls (194.51,77.12) and (196.92,74.71) .. (199.88,74.71) .. controls (202.84,74.71) and (205.24,77.12) .. (205.24,80.08) .. controls (205.24,83.04) and (202.84,85.44) .. (199.88,85.44) .. controls (196.92,85.44) and (194.51,83.04) .. (194.51,80.08) -- cycle ;
\draw    (200.43,150.43) -- (200.58,89.87) ;
\draw [shift={(200.59,86.87)}, rotate = 90.14] [fill={rgb, 255:red, 0; green, 0; blue, 0 }  ][line width=0.08]  [draw opacity=0] (10.72,-5.15) -- (0,0) -- (10.72,5.15) -- (7.12,0) -- cycle    ;
\draw    (280.75,150.5) -- (200.43,150.43) ;
\draw    (90,80) -- (135.51,80.07) ;
\draw [shift={(138.51,80.08)}, rotate = 180.09] [fill={rgb, 255:red, 0; green, 0; blue, 0 }  ][line width=0.08]  [draw opacity=0] (10.72,-5.15) -- (0,0) -- (10.72,5.15) -- (7.12,0) -- cycle    ;
\draw  [dash pattern={on 4.5pt off 4.5pt}] (117.75,29.5) -- (440.25,29.5) -- (440.25,179.5) -- (117.75,179.5) -- cycle ;
\draw   (389.07,60.43) -- (429.07,60.43) -- (429.07,99.43) -- (389.07,99.43) -- cycle ;
\draw    (429.4,80) -- (459.25,80) ;
\draw [shift={(462.25,80)}, rotate = 180] [fill={rgb, 255:red, 0; green, 0; blue, 0 }  ][line width=0.08]  [draw opacity=0] (10.72,-5.15) -- (0,0) -- (10.72,5.15) -- (7.12,0) -- cycle    ;
\draw   (139.57,60.43) -- (179.57,60.43) -- (179.57,99.43) -- (139.57,99.43) -- cycle ;
\draw    (180.74,80.08) -- (191.25,80.02) ;
\draw [shift={(194.25,80)}, rotate = 179.67] [fill={rgb, 255:red, 0; green, 0; blue, 0 }  ][line width=0.08]  [draw opacity=0] (10.72,-5.15) -- (0,0) -- (10.72,5.15) -- (7.12,0) -- cycle    ;
\draw (242,46.6) node [anchor=north west][inner sep=0.75pt]    {${\textstyle \Erre^{[1]}}(v^{[1]})$};
\draw (268,93.9) node [anchor=north west][inner sep=0.75pt]    {$\Erre^{[N]}(v^{[N]})$};
\draw (279.2,64.4) node [anchor=north west][inner sep=0.75pt]    {$\ddots $};
\draw (284.27,142.83) node [anchor=north west][inner sep=0.75pt]    {$M_{vz}$};
\draw (374,123.9) node [anchor=north west][inner sep=0.75pt]    {$\mathbf{z}$};
\draw (213.29,60.17) node [anchor=north west][inner sep=0.75pt]    {$\mathbf{v}$};
\draw (449.9,52) node [anchor=north west][inner sep=0.75pt]    {$\boldsymbol{u}$};
\draw (87.3,58.57) node [anchor=north west][inner sep=0.75pt]    {$\hat{\evec}$};
\draw (392.77,72.83) node [anchor=north west][inner sep=0.75pt]    {$M_{uz}$};
\draw (90,154.9) node [anchor=north west][inner sep=0.75pt]    {$\Erre$};
\draw (142.27,72.83) node [anchor=north west][inner sep=0.75pt]    {$M_{ve}$};
\end{tikzpicture}
    \caption{Internal realization of $\mathbf R$. The matrix
    $M_{ve}$ injects the external residual $\hat{\evec}$, $M_{vz}$
    routes neighboring local-operator outputs $\mathbf z$ back to local-operator inputs
    $\mathbf v$, and $M_{uz}$ reads the control input $\uvec$ from the local-operator outputs.}
    \label{fig:interconnectionREN}
\end{figure}

We next combine the local gain inequalities into a sufficient
condition for the gain from $\hat{\evec}$ to $\uvec$. For a gain
$\gamma_{\mathbf R}>0$ and nonnegative multipliers
$\alpha^{[i]}\ge0$, define
\begin{align}
\Pi_{N,v}
&:=
\blkdiag\{
\alpha^{[1]}(\gamma^{[1]})^2 I_{q_1},
\ldots,
\alpha^{[N]}(\gamma^{[N]})^2 I_{q_N}
\},
\label{eq:Pi_def}\\
\Pi_{N,z}
&:=
\blkdiag\{
\alpha^{[1]} I_{r_1},
\ldots,
\alpha^{[N]} I_{r_N}
\}.
\label{eq:Pi_def2}
\end{align}
Multiplying \eqref{eq:local_dissipation} by
$\alpha^{[i]}$ and summing over $i$ yields
\begin{align}
    V_\alpha(\xi_{t+1})-V_\alpha(\xi_t)
&\le
v_t^\top\Pi_{N,v}v_t-z_t^\top\Pi_{N,z}z_t, \label{eq:overall_storage_ineq}
\\
V_\alpha(\xi_t)&:=
\sum_{i\in\mathcal V}\alpha^{[i]}V^{[i]}(\xi_t^{[i]}), \label{eq:overall_storage}
\end{align}
where $\xi_t:=\col_{i\in\cV}(\xi_t^{[i]})$. Thus, $\Pi_{N,v}$ and $\Pi_{N,z}$ collect the weighted local gain
inequalities. The following result gives a sufficient condition for
their interconnection to have the prescribed gain
$\gamma_{\mathbf R}$ from $\hat{\evec}$ to $\uvec$.

\begin{proposition} \label{prop:stabL2}
Let Assumption~\ref{ass:local_diss} hold and assume that the
interconnection~\eqref{eq:interconn_R} is well-posed. If there exist
$\alpha^{[i]}\ge0$ such that
\begin{equation}\label{eq:conditionL22}
G^\top QG\preceq0,
\end{equation}
where
\begin{equation*}
G:=
\begin{bmatrix}
M_{vz} & M_{ve}\\
I & 0\\
0 & I\\
M_{uz} & 0
\end{bmatrix},
\quad
Q:=
\begin{bmatrix}
\Pi_{N,v} & 0 & 0 & 0\\
0 & -\Pi_{N,z} & 0 & 0\\
0 & 0 & -\gamma_{\mathbf R}^2 I_n & 0\\
0 & 0 & 0 & I_m
\end{bmatrix},
\end{equation*}
then the operator $\mathbf R$ induced by
\eqref{eq:interconn_R} satisfies
\begin{equation}\label{eq:R_gain_bound}
\|\uvec\|
\le
\gamma_{\mathbf R}\|\hat{\evec}\|
+\beta_R,
\end{equation}
with $\beta_R:=\sqrt{V_\alpha(\xi_0)}$.
In particular, $\beta_R=0$ when $\xi_0=0$.
\end{proposition}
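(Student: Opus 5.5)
The argument is a standard dissipativity/S-procedure telescoping. I will start from the aggregated inequality \eqref{eq:overall_storage_ineq}, which follows from Assumption~\ref{ass:local_diss} after weighting each local inequality by $\alpha^{[i]}\ge0$ and summing. At each time $t$, well-posedness of \eqref{eq:interconn_R} gives the pointwise relations
\[
v_t=M_{vz}z_t+M_{ve}\hat e_t,\qquad u_t=M_{uz}z_t .
\]
Stacking $\eta_t:=\col(z_t,\hat e_t)$ gives $G\eta_t=\col(v_t,z_t,\hat e_t,u_t)$.

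Expanding the quadratic form then yields the identity
\[
\eta_t^\top G^\top QG\,\eta_t
= v_t^\top\Pi_{N,v}v_t - z_t^\top\Pi_{N,z}z_t-\gamma_{\mathbf R}^2\|\hat e_t\|^2+\|u_t\|^2 .
\]
Condition \eqref{eq:conditionL22} makes the left-hand side nonpositive, so
\[
v_t^\top\Pi_{N,v}v_t - z_t^\top\Pi_{N,z}z_t\le \gamma_{\mathbf R}^2\|\hat e_t\|^2-\|u_t\|^2 .
\]
Combining this with \eqref{eq:overall_storage_ineq} gives the closed-loop dissipation inequality
\[
V_\alpha(\xi_{t+1})-V_\alpha(\xi_t)\le\gamma_{\mathbf R}^2\|\hat e_t\|^2-\|u_t\|^2 .
\]

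Next I sum this inequality from $t=0$ to $T-1$. The storage terms telescope, and since $V_\alpha\ge0$ (each $V^{[i]}\ge0$ and $\alpha^{[i]}\ge0$),
\[
\sum_{t=0}^{T-1}\|u_t\|^2\le\gamma_{\mathbf R}^2\sum_{t=0}^{T-1}\|\hat e_t\|^2+V_\alpha(\xi_0).
\]
For $\hat{\evec}\in\ell_2^n$ the right-hand side is bounded uniformly in $T$ by $\gamma_{\mathbf R}^2\|\hat{\evec}\|^2+V_\alpha(\xi_0)$. The left-hand side is monotone in $T$, so letting $T\to\infty$ shows $\uvec\in\ell_2^m$ and
\[
\|\uvec\|^2\le\gamma_{\mathbf R}^2\|\hat{\evec}\|^2+V_\alpha(\xi_0).
\]
Applying $\sqrt{a+b}\le\sqrt a+\sqrt b$ gives \eqref{eq:R_gain_bound} with $\beta_R=\sqrt{V_\alpha(\xi_0)}$. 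When $\xi_0=0$, we have $V^{[i]}(0)=0$ for every $i$, hence $\beta_R=0$.

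Causality of $\mathbf R$ follows from the recursive state-space structure, given well-posedness: $u_t$ depends only on $\xi_t$ and $\hat e_t$, and $\xi_t$ depends only on past inputs. Together with the gain bound this places $\mathbf R\in\mathcal L_2$. No step is a serious obstacle. The only care points are that the per-step vectors generated by the interconnection must lie in the range of $G$, which is exactly what makes the matrix inequality usable pointwise, and that nonnegativity of $V_\alpha$ is what lets the terminal storage term be dropped.
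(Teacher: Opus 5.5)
Your proof is correct and follows the same route as the paper. The paper only sketches the argument, invoking the compositional dissipativity result of \cite{arcak2016networks} and summing \eqref{eq:overall_storage_ineq} under \eqref{eq:conditionL22}; your evaluation of the quadratic form at $G\col(z_t,\hat e_t)$, the telescoping sum using $V_\alpha\ge0$, and the limit $T\to\infty$ supply exactly the steps that sketch leaves implicit.
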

The result follows from the compositional dissipativity
argument in~\cite[Prop.~5.1]{arcak2016networks}. 
Summing
\eqref{eq:overall_storage_ineq} under \eqref{eq:conditionL22} gives
\eqref{eq:R_gain_bound}. 
We call \eqref{eq:conditionL22} the \emph{network gain certificate}, since it guarantees the prescribed gain of the interconnected operator $\mathbf R$ required by Proposition~\ref{prop:robust_imc} for robust closed-loop stability. Since $\theta^{[i]}$ can be trained freely for fixed $\gamma^{[i]}$,
it remains to select $(\alpha^{[i]},\gamma^{[i]})$ so that this
condition holds. 

\subsection{Free parametrization of the network gain certificate}
\label{sec:param}

We now construct explicit maps from unconstrained variables to the local
gains $\gamma^{[i]}$ and multipliers $\alpha^{[i]}$ so that the network
gain certificate~\eqref{eq:conditionL22} is satisfied by design.
The first construction uses only coupling information associated with
each local operator and the neighboring subsystems. The second uses an upper bound on the
network norm $\|M_{vz}\|_2$, which can reduce conservatism but requires
network-wide information. In both cases, all variables exposed to the
optimizer are unconstrained. The additional structure below on
$M_{ve}$ and $M_{uz}$ is used only to obtain these explicit maps.
\begin{assumption}\label{ass:MzMe}
The matrices $M_{ve}$ and $M_{uz}$ satisfy
\begin{enumerate}
    \item there exists a permutation matrix $C$ such that
    \begin{equation}\label{eq:Mve_structure_again}
    C M_{ve}
    =
    \begin{bmatrix}
    I_n\\
    0
    \end{bmatrix},
    \qquad
    M_{ve}^\top M_{ve}=I_n;
    \end{equation}
    \item the columns of $M_{uz}$ are orthogonal, namely
    \begin{equation}\label{eq:Muz_structure_again}
    M_{uz}^\top M_{uz}
    =
    H
    =
    \diag(h),
    \qquad h\ge0.
    \end{equation}
\end{enumerate}
\end{assumption}
Assumption~\ref{ass:MzMe} restricts only how signals enter and leave
the interconnected local neural operators. Condition~(i) means that every
component of the reconstructed residual is injected, with unit gain,
into exactly one internal input channel; additional internal channels
may receive no residual directly. Condition~(ii) means that different components of $\mathbf z$ contribute to mutually orthogonal
directions of the control input. For example, $M_{ve}=I$ and $M_{uz}=I$ simply feed each residual component to one local-operator input and use each
local-operator output
as one control component. Together with graph compatibility, these conditions leave $M_{vz}$ as the matrix describing the coupling between local operators.

To express the network gain condition in terms of subsystem-wise quantities, we first introduce notation that distinguishes the local input channels receiving the reconstructed residual and summarizes how the local outputs enter the control input.
Let $m_{jk}$ denote the $(j,k)$ entry of $M_{vz}$. For each subsystem $i$, let $\mathcal J_{v_i}$ and $\mathcal J_{z_i}$ denote the sets of coordinate indices of the pointwise stacked vectors $v_t$ and $z_t$ corresponding to $v_t^{[i]}$ and $z_t^{[i]}$, respectively.
We define
\begin{equation}\label{eq:Jvi1_def}
\mathcal J_{v_i}^1
:=
\left\{
j\in\mathcal J_{v_i}
\;:\;
\text{row }j\text{ of }M_{ve}\text{ is nonzero}
\right\},
\end{equation}
and
$\mathcal J_{v_i}^0
:=
\mathcal J_{v_i}\setminus\mathcal J_{v_i}^1$.
Thus, $\mathcal J_{v_i}^1$ contains the input coordinates of $\mathbf R^{[i]}$ that receive a component of the reconstructed residual through $M_{ve}$, whereas $\mathcal J_{v_i}^0$ contains the remaining input coordinates.
Finally, recalling that $h_j$ denotes the $j$-th diagonal entry of
$H=M_{uz}^\top M_{uz}$ in~\eqref{eq:Muz_structure_again}, define
\begin{equation}\label{eq:hbar_def}
\bar h_i
:=
\max_{j\in\mathcal J_{z_i}} h_j .
\end{equation}

\subsubsection{Neighbor-based compositional parametrization}
\label{subsec:param_local}

The first parametrization replaces the network gain certificate~\eqref{eq:conditionL22} with sufficient conditions based on local bounds on the interconnection matrix $M_{vz}$. Since $v=M_{vz}z+M_{ve}\hat e$, rows of $M_{vz}$ describe how the outputs $z$ contribute to each internal input coordinate, whereas columns describe how each output coordinate contributes to the internal inputs across the network. For each subsystem $i$, define
\begin{align}
\rho_z^{[i]}
&:=
\max_{j\in\mathcal J_{z_i}}
\sum_k |m_{kj}|,  \
\rho_{v,0}^{[i]}
:=
\max_{j\in\mathcal J_{v_i}^0}
\sum_k |m_{jk}|,
\label{eq:rho_v0_z_local}\\
\rho_{v,1}^{[i]}
&:=
\max_{j\in\mathcal J_{v_i}^1}
\sum_k |m_{jk}|.
\label{eq:rho_v1_local}
\end{align}
We set $\rho_{v,0}^{[i]}=0$ when
$\mathcal J_{v_i}^0=\emptyset$ and
$\rho_{v,1}^{[i]}=0$ when
$\mathcal J_{v_i}^1=\emptyset$.
Thus, $\rho_z^{[i]}$ bounds the total outgoing coupling from the output coordinates of $\mathbf R^{[i]}$, while $\rho_{v,0}^{[i]}$ and $\rho_{v,1}^{[i]}$ bound the total incoming coupling to its input coordinates without and with direct residual injection, respectively. By graph compatibility, these quantities depend only on couplings involving subsystem $i$ and its neighbors.

From~\eqref{eq:Pi_def}, the local gain $\gamma^{[i]}$ enters the network gain certificate through the product
$c^{[i]}:=\alpha^{[i]}(\gamma^{[i]})^2$.
For the neighbor-based construction, we define the following quantities associated with the two classes of input coordinates:
\begin{equation}\label{eq:Bi_local}
\mathcal B_i
:=
\left\{
\frac{\gamma_{\mathbf R}^2}
{1+\rho_{v,1}^{[i]}\gamma_{\mathbf R}^2}
\;|\;
\mathcal J_{v_i}^1\neq\emptyset
\right\}
\cup
\left\{
\frac{1}{\rho_{v,0}^{[i]}}
\;|\;
\rho_{v,0}^{[i]}>0
\right\}.
\end{equation}
We then set
\begin{equation}\label{eq:cbar_local}
\bar c^{[i]}
:=
\begin{cases}
\min \mathcal B_i, & \mathcal B_i\neq\emptyset,\\
\bar c_0^{[i]}, & \mathcal B_i=\emptyset,
\end{cases}
\end{equation}
where $\bar c_0^{[i]}>0$ is a fixed design constant used consistently
across network configurations.
\begin{theorem}\label{thm:param_local}
Fix $\gamma_{\mathbf R}>0$ and $\varepsilon>0$, and suppose
Assumptions~\ref{ass:local_diss} and~\ref{ass:MzMe} hold.
For each subsystem $i$, choose
\[
\eta^{[i]}
:=
\col(\eta_\alpha^{[i]},\eta_c^{[i]})
\in\R^2
\]
and define
\begin{align}
\alpha^{[i]}
&:=
\bar h_i+\rho_z^{[i]}
+(\eta_\alpha^{[i]})^2+\varepsilon,
\label{eq:alpha_local_compact}\\
c^{[i]}
&:=
\frac{\bar c^{[i]}}
{1+\exp(-\eta_c^{[i]})},
\label{eq:c_local_compact}\\
\gamma^{[i]}
&:=
\sqrt{\frac{c^{[i]}}{\alpha^{[i]}}}.
\label{eq:gamma_local_compact}
\end{align}
Then, for every choice of the free variables $\eta^{[i]}$, the
interconnection~\eqref{eq:interconn_R} satisfies
\eqref{eq:conditionL22}.
\end{theorem}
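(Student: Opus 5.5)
The plan is to expand the network gain certificate \eqref{eq:conditionL22} into a scalar quadratic inequality and then bound it coordinate by coordinate using only the row and column sums of $M_{vz}$. For any $(z,\hat e)\in\R^r\times\R^n$, put $v:=M_{vz}z+M_{ve}\hat e$. Then $\begin{bmatrix} z\\ \hat e\end{bmatrix}^\top G^\top QG\begin{bmatrix} z\\ \hat e\end{bmatrix}$ equals
\[
\textstyle\sum_{j} c^{[i(j)]} v_j^2 \;-\; \sum_k \alpha^{[i(k)]} z_k^2 \;-\; \gamma_{\mathbf R}^2\|\hat e\|^2 \;+\; \sum_k h_k z_k^2 .
\]
Here $i(j)$ denotes the subsystem owning coordinate $j$, and we have used $\Pi_{N,v}=\blkdiag(c^{[i]}I)$ and $H=M_{uz}^\top M_{uz}=\diag(h)$ from Assumption~\ref{ass:MzMe}(ii). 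We must show this expression is nonpositive. First we record that the maps are well defined for every $\eta^{[i]}$: $\alpha^{[i]}\ge\varepsilon>0$ and $0<c^{[i]}<\bar c^{[i]}$ by the logistic factor. Hence $\gamma^{[i]}>0$, and $c^{[i]}=\alpha^{[i]}(\gamma^{[i]})^2$ holds by construction.

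Next we bound $c^{[i]}v_j^2$ for each input coordinate $j\in\mathcal J_{v_i}$, in two cases.
\begin{itemize}
\item For $j\in\mathcal J_{v_i}^0$ we have $v_j=\sum_k m_{jk}z_k$. Cauchy--Schwarz gives $v_j^2\le(\sum_k|m_{jk}|)\sum_k|m_{jk}|z_k^2\le\rho_{v,0}^{[i]}\sum_k|m_{jk}|z_k^2$. Since $c^{[i]}\le 1/\rho_{v,0}^{[i]}$ whenever $\rho_{v,0}^{[i]}>0$, we get $c^{[i]}v_j^2\le\sum_k|m_{jk}|z_k^2$. If the row is zero, the bound holds trivially.
\item For $j\in\mathcal J_{v_i}^1$, Assumption~\ref{ass:MzMe}(i) gives $v_j=a_j+\hat e_{p(j)}$ with $a_j:=\sum_k m_{jk}z_k$, where $p$ is injective and onto $\{1,\dots,n\}$. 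We use Young's inequality $(a+b)^2\le(1+s)a^2+(1+s^{-1})b^2$ with $s=1/(\rho_{v,1}^{[i]}\gamma_{\mathbf R}^2)$. Together with $c^{[i]}\le \gamma_{\mathbf R}^2/(1+\rho_{v,1}^{[i]}\gamma_{\mathbf R}^2)$, this yields
\[
c^{[i]}v_j^2\le \tfrac{1}{\rho_{v,1}^{[i]}}a_j^2+\gamma_{\mathbf R}^2\hat e_{p(j)}^2\le \textstyle\sum_k|m_{jk}|z_k^2+\gamma_{\mathbf R}^2\hat e_{p(j)}^2 ,
\]
again using Cauchy--Schwarz on $a_j$. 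If $\rho_{v,1}^{[i]}=0$, then $a_j=0$ and $c^{[i]}\le\gamma_{\mathbf R}^2$ suffices.
\end{itemize}
If $\mathcal B_i=\emptyset$, every input coordinate of subsystem $i$ has a zero row and no residual injection, so $v_j=0$ and any $\bar c_0^{[i]}$ works.

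Finally we sum over all $j$. The residual terms add up to at most $\gamma_{\mathbf R}^2\|\hat e\|^2$, because each component of $\hat e$ is injected into exactly one coordinate. The coupling terms give $\sum_k z_k^2\sum_j|m_{jk}|\le\sum_k\rho_z^{[i(k)]}z_k^2$ by the definition of $\rho_z^{[i]}$ as a maximal column sum. The quadratic form is therefore bounded by $\sum_k\bigl(\rho_z^{[i(k)]}+h_k-\alpha^{[i(k)]}\bigr)z_k^2$. This is nonpositive, since $\alpha^{[i]}\ge\bar h_i+\rho_z^{[i]}+\varepsilon$ and $h_k\le\bar h_i$ for $k\in\mathcal J_{z_i}$.

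The main obstacle is the $\mathcal J^1_{v_i}$ case. There, the split between coupling and residual has to be chosen exactly so that the threshold $\gamma_{\mathbf R}^2/(1+\rho_{v,1}^{[i]}\gamma_{\mathbf R}^2)$ appearing in $\mathcal B_i$ produces a coefficient $1/\rho_{v,1}^{[i]}$ on $a_j^2$. A second point of care is that the degenerate cases ($\rho=0$, empty index sets) must be handled so the argument holds uniformly for all $\eta^{[i]}$.
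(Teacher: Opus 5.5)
Your proof is correct. It reaches the certificate by a more elementary route than the paper does.

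\textbf{The paper's route.} The paper negates $G^\top QG$ into a block LMI in $(z,\hat e)$. It then takes a Schur complement on the positive definite $\hat e$-block, which reduces the condition to $\Pi_{N,z}-H-M_{vz}^\top\widetilde D M_{vz}\succeq0$. Here $\widetilde D$ is diagonal, with entries $c^{[i]}$ on $\mathcal J_{v_i}^0$ and $c^{[i]}\gamma_{\mathbf R}^2/(\gamma_{\mathbf R}^2-c^{[i]})$ on $\mathcal J_{v_i}^1$. A second Schur complement gives an augmented matrix with $\widetilde D^{-1}$ in its lower block, and the paper checks diagonal dominance row by row.

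\textbf{Your route.} You work with the scalar quadratic form throughout, and each of your steps corresponds to a step of the paper.
\begin{itemize}
\item Your Young split with $s=1/(\rho_{v,1}^{[i]}\gamma_{\mathbf R}^2)$ is a relaxed version of the first Schur complement. Eliminating $\hat e_{p(j)}$ exactly would leave the coefficient $\widetilde D_{jj}$ on $a_j^2$. Your bound $1/\rho_{v,1}^{[i]}$ on that coefficient is the same threshold condition $\widetilde D_{jj}\rho_{v,1}^{[i]}<1$ that the paper verifies.
\item Your row-wise Cauchy--Schwarz estimate is the quadratic-form proof of the diagonal-dominance criterion that the paper invokes. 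You follow it by exchanging the order of summation and bounding by the column sums $\rho_z^{[i]}$.
\end{itemize}

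\textbf{What each buys.} Your version is self-contained. It needs neither invertibility of the $\hat e$-block nor of $\widetilde D$, and no appeal to a Gershgorin-type lemma. It also treats the degenerate cases ($\rho_{v,0}^{[i]}=0$, $\rho_{v,1}^{[i]}=0$, $\mathcal B_i=\emptyset$) explicitly. The paper's matrix formulation is more compact and makes the source of conservatism visible, since the LMI is replaced by a diagonal-dominance test. Its first step is also reused in the proof of Theorem~\ref{thm:param_spectral}, where the same Schur complement is followed by the spectral bound $\widetilde D\preceq\tau I_q$ in place of diagonal dominance.
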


The proof is reported in Appendix~\ref{app:proof_param_local}. It relies on a sufficient diagonal-dominance condition for the matrix inequality~\eqref{eq:conditionL22}, where the coupling terms are bounded through the local row and column sums of $M_{vz}$. This yields a lower bound on $\alpha^{[i]}$ and an upper bound on
$c^{[i]}=\alpha^{[i]}(\gamma^{[i]})^2$.
Accordingly, $\eta_\alpha^{[i]}$ selects the slack above the lower bound on $\alpha^{[i]}$, while $\eta_c^{[i]}$ determines $c^{[i]}$ within $(0,\bar c^{[i]})$ through~\eqref{eq:c_local_compact}. The local gain is then given by
$\gamma^{[i]}=\sqrt{c^{[i]}/\alpha^{[i]}}$.
The construction uses only interconnection information associated with subsystem $i$ and its neighbors. Its conservatism stems from replacing the full network gain certificate with these local diagonal-dominance bounds.

\noindent\emph{Illustrative example.}
Consider two scalar local operators with $M_{ve}=M_{uz}=I_2$,
$\gamma_{\mathbf R}=1$, $\varepsilon=0.02$, and
\[
M_{vz}
=
\begin{bmatrix}
0.6 & 0.4\\
0.2 & -0.3
\end{bmatrix}.
\]
Since $M_{uz}=I_2$, we have $\bar h_1=\bar h_2=1$.
The column sums of $|M_{vz}|$ give
$(\rho_z^{[1]},\rho_z^{[2]})=(0.8,0.7)$, while
the row sums give
$(\rho_{v,1}^{[1]},\rho_{v,1}^{[2]})=(1,0.5)$.
Moreover, $M_{ve}=I_2$ implies that every input channel
receives the residual directly, so
$\mathcal J_{v_i}^0=\emptyset$ and $\rho_{v,0}^{[i]}=0$.
Thus, $(\bar c^{[1]},\bar c^{[2]})=(0.5,2/3)$.
Choosing $c^{[1]}=c^{[2]}=0.25$ and
$\eta_\alpha^{[1]}=\eta_\alpha^{[2]}=0$ yields
$(\alpha^{[1]},\gamma^{[1]})=(1.82,0.371)$ and
$(\alpha^{[2]},\gamma^{[2]})=(1.72,0.381)$.
Figure~\ref{fig:feas_reg} compares the resulting sufficient region with
the original network gain certificate~\eqref{eq:conditionL22}.

\begin{figure}
    \centering
    \includegraphics[width=0.9\linewidth]{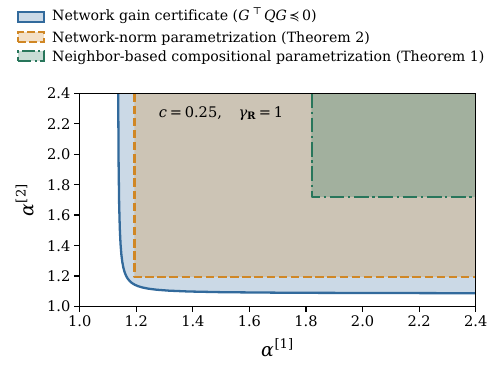}
    \caption{Feasible regions for the two-subsystem
    example with $c=\alpha^{[i]}(\gamma^{[i]})^2=0.25$:
    network gain certificate~\eqref{eq:conditionL22} (blue),
    neighbor-based parametrization (green), and network-norm
    parametrization (yellow).}
    \label{fig:feas_reg}
\end{figure}

\subsubsection{Network-norm parametrization}
\label{subsec:param_spectral}

The neighbor-based parametrization relies on separate local bounds on the couplings involving each subsystem. We now consider an alternative based on a bound on the overall interconnection matrix $M_{vz}$. By accounting for the coupling structure of the network as a whole, this can reduce conservatism, at the price of requiring network-wide information. Let
\[
\widehat\kappa\ge\|M_{vz}\|_2^2.
\]
Values of the bound $\widehat{\kappa}$ can be computed centrally or estimated through distributed
spectral algorithms~\cite{kempe2004decentralized}.

\begin{theorem}\label{thm:param_spectral}
Fix $\gamma_{\mathbf R}>0$ and $\varepsilon>0$, and suppose
Assumptions~\ref{ass:local_diss} and~\ref{ass:MzMe} hold. Let
$\widehat\kappa\ge\|M_{vz}\|_2^2$.\\
Choose $\eta_{\rm net}\in\R$ and define
\begin{equation}\label{eq:c_spectral}
c
:=
\frac{\gamma_{\mathbf R}^2}
{1+\exp(-\eta_{\rm net})},
\qquad
\tau
:=
\frac{c\gamma_{\mathbf R}^2}
{\gamma_{\mathbf R}^2-c}.
\end{equation}
For each subsystem $i$, choose $\eta_\alpha^{[i]}\in\R$ and define
\begin{align}
\alpha^{[i]}
&:=
\bar h_i+\tau \widehat\kappa
+(\eta_\alpha^{[i]})^2+\varepsilon,
\label{eq:alpha_spectral}\\
\gamma^{[i]}
&:=
\sqrt{\frac{c}{\alpha^{[i]}}}.
\label{eq:gamma_spectral}
\end{align}
Then, for every choice of $\eta_{\rm net}$ and
$\eta_\alpha^{[i]}$, the interconnection~\eqref{eq:interconn_R}
satisfies~\eqref{eq:conditionL22}.
\end{theorem}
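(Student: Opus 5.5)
The plan is to expand the network gain certificate~\eqref{eq:conditionL22} into a $2\times2$ block inequality in the variables $(z,\hat e)$, use the fact that the construction makes $\Pi_{N,v}$ a multiple of the identity, and conclude with a Schur complement combined with the spectral-norm bound $\widehat\kappa$.

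\emph{Step 1: expand the certificate.} Multiplying out $G^\top QG$ gives
\[
G^\top QG=\begin{bmatrix}
M_{vz}^\top\Pi_{N,v}M_{vz}-\Pi_{N,z}+M_{uz}^\top M_{uz} & M_{vz}^\top\Pi_{N,v}M_{ve}\\
M_{ve}^\top\Pi_{N,v}M_{vz} & M_{ve}^\top\Pi_{N,v}M_{ve}-\gamma_{\mathbf R}^2 I_n
\end{bmatrix}.
\]

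\emph{Step 2: simplify with the parametrization and Assumption~\ref{ass:MzMe}.} By~\eqref{eq:gamma_spectral}, $\alpha^{[i]}(\gamma^{[i]})^2=c$ for every $i$, so $\Pi_{N,v}=cI_q$. Assumption~\ref{ass:MzMe}(i) gives $M_{ve}^\top M_{ve}=I_n$, hence the lower-right block equals $(c-\gamma_{\mathbf R}^2)I_n$. Since the sigmoid in~\eqref{eq:c_spectral} gives $c\in(0,\gamma_{\mathbf R}^2)$, this block is negative definite and $\tau>0$ is well defined. Assumption~\ref{ass:MzMe}(ii) replaces $M_{uz}^\top M_{uz}$ by $H$.

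\emph{Step 3: Schur complement.} The certificate is then equivalent to
\[
cM_{vz}^\top M_{vz}+\frac{c^2}{\gamma_{\mathbf R}^2-c}\,M_{vz}^\top M_{ve}M_{ve}^\top M_{vz}+H-\Pi_{N,z}\preceq0 .
\]
The key observation is that $M_{ve}M_{ve}^\top$ is an orthogonal projection, because $M_{ve}$ has orthonormal columns. Therefore $M_{ve}M_{ve}^\top\preceq I_q$, and the left-hand side is bounded above by
\[
\Bigl(c+\tfrac{c^2}{\gamma_{\mathbf R}^2-c}\Bigr)M_{vz}^\top M_{vz}+H-\Pi_{N,z}
=\tau M_{vz}^\top M_{vz}+H-\Pi_{N,z}.
\]
Using $M_{vz}^\top M_{vz}\preceq\|M_{vz}\|_2^2 I\preceq\widehat\kappa I$, it suffices to show that $\tau\widehat\kappa I+H-\Pi_{N,z}\preceq0$.

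\emph{Step 4: conclude with the choice of $\alpha^{[i]}$.} The matrix $\tau\widehat\kappa I+H-\Pi_{N,z}$ is diagonal. For each coordinate $j\in\mathcal J_{z_i}$ its entry is $\tau\widehat\kappa+h_j-\alpha^{[i]}$. By~\eqref{eq:hbar_def} and~\eqref{eq:alpha_spectral},
\[
\tau\widehat\kappa+h_j-\alpha^{[i]}\le\tau\widehat\kappa+\bar h_i-\alpha^{[i]}=-(\eta_\alpha^{[i]})^2-\varepsilon<0 .
\]
This bound holds for every value of the free variables $\eta_{\rm net}$ and $\eta_\alpha^{[i]}$, which proves the claim.

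The only delicate point is Step 3: the cross term must be bounded through the projection property of $M_{ve}$, since $\|M_{ve}\|_2=1$ alone would not make explicit that the constant combines exactly into $\tau$. The remaining steps are bookkeeping.
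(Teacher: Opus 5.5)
Your proof is correct and follows the paper's argument essentially step for step. The steps match: $\Pi_{N,v}=cI_q$, a Schur complement on the $(\gamma_{\mathbf R}^2-c)I_n$ block, the projector bound $M_{ve}M_{ve}^\top\preceq I_q$ that collapses the coefficient into $\tau$, then $M_{vz}^\top M_{vz}\preceq\widehat\kappa I_r$ and the diagonal check using $h_j\le\bar h_i$. Your closing remark overstates the subtlety, since $\|M_{ve}\|_2\le 1$ alone already gives $M_{ve}M_{ve}^\top\preceq I_q$ and hence exactly the same constant $\tau$.
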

The proof is reported in Appendix~\ref{app:proof_param_spectral}. It bounds the coupling term in the network gain certificate~\eqref{eq:conditionL22} through the network-wide quantity $\widehat\kappa$, leading to a common value
$c=\alpha^{[i]}(\gamma^{[i]})^2$ for all subsystems and a corresponding lower bound on each $\alpha^{[i]}$.
Accordingly, $\eta_{\rm net}$ selects the common $c$ within
$(0,\gamma_{\mathbf R}^2)$, while each $\eta_\alpha^{[i]}$ independently selects the slack above its local lower bound on $\alpha^{[i]}$. The local gains are then given by
$\gamma^{[i]}=\sqrt{c/\alpha^{[i]}}$.

Using the overall coupling norm instead of separate local bounds can reduce conservatism, as illustrated by the example below, at the price of requiring network-wide information. For adjacency-type interconnections, this reduction can be particularly relevant in irregular networks where a few subsystems have substantially higher degree than the others.

For the same example,
$\widehat\kappa=\|M_{vz}\|_2^2=0.52$.
With $c=0.25$ and $\eta_\alpha^{[i]}=0$,
Theorem~\ref{thm:param_spectral} gives
$\alpha^{[1]}=\alpha^{[2]}=1.193,
$ $
\gamma^{[1]}=\gamma^{[2]}=0.458$.
Thus, for this interconnection and the same $c$, the network-norm
construction admits larger local gains than the neighbor-based one, as
illustrated in Fig.~\ref{fig:feas_reg}. This ordering is not general:
the two parametrizations are different sufficient inner approximations
of~\eqref{eq:conditionL22}, and neither generally dominates the other.

\subsection{PnP recertification}
\label{sec:pnp}

We now characterize how the two free parametrizations can be updated after a PnP transition while preserving the prescribed network gain.

For the network-norm parametrization, $\widehat\kappa$ depends on the complete interconnection and must in general be recomputed when $M_{vz}$ changes. This may require updating the local gain parameters throughout the network. If, instead, a family $\mathfrak M$ of admissible interconnections is known in advance and $\bar h_i$ is unchanged for surviving subsystems, one can use a common worst-case bound $\bar\kappa
\ge
\sup_{M_{vz}\in\mathfrak M}|M_{vz}|_2^2.$
Theorem~\ref{thm:param_spectral} then certifies every configuration in the family with the same bound, so that surviving subsystems need not update their gain parameters after a PnP transition within this family. The price is the additional conservatism introduced by the worst-case network bound.

The neighbor-based parametrization instead admits localized recertification because its defining quantities depend only on couplings involving each subsystem and its neighbors. We next characterize which of these quantities must be updated after a PnP transition.

Consider a transition from $(\cG,\F)$ to $(\cG^+,\F^+)$ and denote by
$M_{vz}^+$, $M_{ve}^+$, and $M_{uz}^+$ the post-change
interconnection matrices. We assume that $M_{ve}^+$ and $M_{uz}^+$ satisfy
Assumption~\ref{ass:MzMe}, with all pre-existing blocks between surviving
subsystems unchanged; new blocks involving added subsystems may be
introduced. We further assume that the PnP transition modifies $M_{vz}$ only by adding or removing coupling blocks associated with the added or removed network connections. We denote by
$\mathbf R^+$ the resulting post-change interconnected operator. Let $\rho_z^{+[i]}$, $\rho_{v,0}^{+[i]}$, and
$\rho_{v,1}^{+[i]}$ denote the bounds
\eqref{eq:rho_v0_z_local}--\eqref{eq:rho_v1_local} evaluated for the
post-change interconnection, and let $\bar h_i^+$ and
$\bar c^{+[i]}$ denote the corresponding quantities defined in
\eqref{eq:hbar_def} and~\eqref{eq:cbar_local}. Define
\begin{multline}\label{eq:influence_region_new}
\mathcal I
:=
(\cV^+\setminus\cV)
\cup
\Big\{
i\in\cV\cap\cV^+:\\
\bar h_i^+ + \rho_z^{+[i]}
\neq
\bar h_i + \rho_z^{[i]}
\ \text{or}\
\bar c^{+[i]}\neq\bar c^{[i]}
\Big\}.
\end{multline}
Thus, $\mathcal I$ contains the new subsystems and the surviving
subsystems whose quantities entering the neighbor-based parametrization
change. Under the interconnection update described above, 
$\mathcal I\subseteq\mathcal A(\cG,\cG^+)$, since the quantities defining \(\mathcal I\) can change only for subsystems whose neighborhood changes.

We use the following recertification procedure. For every surviving
subsystem, keep the trained neural parameters unchanged, $\theta^{+[i]}=\theta^{[i]},$ $i\in\cV\cap\cV^+,$ while each newly added subsystem is assigned parameters
$\theta^{+[i]}\in\R^{p_i}$.
For every $i\in\mathcal I$, choose any
$\eta^{+[i]}\in\R^2$ and compute
$(\alpha^{+[i]},\gamma^{+[i]})$ from
\eqref{eq:alpha_local_compact}--\eqref{eq:gamma_local_compact}
using the post-change quantities. For every surviving subsystem
$i\notin\mathcal I$, retain $\eta^{+[i]}=\eta^{[i]}$, $\alpha^{+[i]}=\alpha^{[i]}$, $\gamma^{+[i]}=\gamma^{[i]}$.
For $i\in\mathcal I$, the corresponding local operator is instantiated
through its free parametrization using
$(\theta^{+[i]},\gamma^{+[i]})$; surviving local operators outside
$\mathcal I$ retain their previous realization. Thus, recertification
does not require gradient-based retraining of the surviving
$\theta^{[i]}$.

\begin{theorem}
\label{thm:pnp_locality_new}
Fix $\gamma_{\mathbf R}>0$ and $\varepsilon>0$, and suppose that
Assumptions~\ref{ass:local_diss} and~\ref{ass:MzMe} hold before and
after the PnP event. Assume that the pre-change interconnection is
certified by Theorem~\ref{thm:param_local} and apply the recertification
procedure above. Then the post-change interconnection satisfies
\eqref{eq:conditionL22} with the same target gain
$\gamma_{\mathbf R}$ 
and the resulting post-change
controller satisfies the PnP update constraints of
Definition~\ref{def:pnp_update}.
\end{theorem}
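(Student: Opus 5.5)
Our approach is to reduce the claim to Theorem~\ref{thm:param_local} applied to the post-change interconnection. That is, we show that every post-change pair $(\alpha^{+[i]},\gamma^{+[i]})$, $i\in\cV^+$, is exactly of the form \eqref{eq:alpha_local_compact}--\eqref{eq:gamma_local_compact}, evaluated with the post-change quantities $\bar h_i^+$, $\rho_z^{+[i]}$, $\bar c^{+[i]}$, for some free vector $\eta^{+[i]}\in\R^2$. Once this is established, Theorem~\ref{thm:param_local} holds for every choice of free variables, and Assumptions~\ref{ass:local_diss} and~\ref{ass:MzMe} hold after the event. It then yields \eqref{eq:conditionL22} for $(M_{vz}^+,M_{ve}^+,M_{uz}^+)$ with the same $\gamma_{\mathbf R}$.

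We proceed by cases. For $i\in\mathcal I$ the claim holds by construction, since the procedure computes $(\alpha^{+[i]},\gamma^{+[i]})$ from \eqref{eq:alpha_local_compact}--\eqref{eq:gamma_local_compact} using post-change data. For a surviving $i\notin\mathcal I$, the definition \eqref{eq:influence_region_new} gives $\bar h_i^++\rho_z^{+[i]}=\bar h_i+\rho_z^{[i]}$ and $\bar c^{+[i]}=\bar c^{[i]}$. Since $\eta^{+[i]}=\eta^{[i]}$, we obtain
\[
\alpha^{+[i]}=\bar h_i^++\rho_z^{+[i]}+(\eta_\alpha^{[i]})^2+\varepsilon=\alpha^{[i]},\qquad c^{+[i]}=\frac{\bar c^{+[i]}}{1+\exp(-\eta_c^{[i]})}=c^{[i]}.
\]
Hence $\gamma^{+[i]}=\gamma^{[i]}$ is precisely the post-change formula evaluated at $\eta^{[i]}$. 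Note that $\alpha^{[i]}$ enters only through the sum $\bar h_i+\rho_z^{[i]}$, so equality of that sum suffices even if the individual terms change. We also verify that each local operator satisfies Assumption~\ref{ass:local_diss} with its assigned gain. For retained operators this holds because $\theta^{[i]}$ and $\gamma^{[i]}$ are unchanged. For $i\in\mathcal I$ it holds because the free parametrization is re-instantiated with $(\theta^{+[i]},\gamma^{+[i]})$ and certifies any $\theta$. Well-posedness is inherited, since the same class of local realizations is used.

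Next we check the PnP update constraints \eqref{eq:pnp_update_constraint}. For $i\in(\cV\cap\cV^+)\setminus\cA(\cG,\cG^+)$ we need $\K^{+[i]}\equiv\K^{[i]}$. We use the inclusion $\mathcal I\subseteq\cA(\cG,\cG^+)$, noted before the theorem. Since $M_{vz}$ changes only by adding or removing blocks associated with changed edges, and the blocks of $M_{ve},M_{uz}$ between surviving subsystems are unchanged, the quantities $\rho_z^{[i]}$, $\rho_{v,0}^{[i]}$, $\rho_{v,1}^{[i]}$, $\bar h_i$, $\mathcal J_{v_i}^{0/1}$, and hence $\bar c^{[i]}$, depend only on rows and columns of the blocks with $j\in\cN_i$. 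By graph compatibility \eqref{eq:graph_compatibility}, these are unchanged whenever $\cN_i^+=\cN_i$. Consequently such $i$ is not in $\mathcal I$, and it keeps $(\theta^{[i]},\gamma^{[i]})$, so its local realization $\mathbf R^{[i]}$ is unchanged. Its rows of $M_{ve},M_{vz},M_{uz}$ are also unchanged. Moreover, $\F^{+[i]}\equiv\F^{[i]}$ because $i\notin\cA$, so the local residual reconstruction $\hat{\evec}^{[i]}$ is the same. Therefore the local controller $\K^{+[i]}$, consisting of residual reconstruction, local operator and local interconnection rows, coincides with $\K^{[i]}$.

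We expect the main obstacle to be this last locality step. One subtlety is that the column sum defining $\rho_z^{[i]}$ involves entries $m_{kj}$ in rows of other subsystems $k$. We must argue that these are nonzero only for neighbors $k$ with $j$-column in block $(k,i)$, and that such blocks change only when the edge $\{i,k\}$ changes. A second subtlety is the treatment of $\bar c_0^{[i]}$ when $\mathcal B_i$ becomes empty or nonempty; this is exactly why it is required to be a fixed constant used consistently across configurations. A careful block-index bookkeeping argument should settle both points.
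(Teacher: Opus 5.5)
Your proposal is correct and follows the paper's proof essentially step by step. Retaining $\eta^{[i]}$ for surviving $i\notin\mathcal I$ reproduces $\alpha^{[i]},c^{[i]},\gamma^{[i]}$ as a valid post-change instance of Theorem~\ref{thm:param_local}. The inclusion $\mathcal I\subseteq\cA(\cG,\cG^+)$ then holds because undirectedness of $\cG$, graph compatibility, and the assumption that $M_{vz}$ changes only in blocks of added or removed edges leave all data defining $\rho_z^{[i]},\rho_{v,0}^{[i]},\rho_{v,1}^{[i]},\bar h_i,\mathcal J_{v_i}^{0},\mathcal J_{v_i}^{1}$ unchanged whenever $\cN_i^+=\cN_i$; with $\bar c_0^{[i]}$ fixed, $\bar c^{[i]}$ is therefore unchanged too, which settles both subtleties you flag.

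Your aside that well-posedness is ``inherited'' is not justified in general: with feedthrough in the local operators, it depends on $M_{vz}^+$. It is not needed either, since \eqref{eq:conditionL22} is a purely algebraic condition and well-posedness is a standing assumption in the paper.
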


The proof is reported in Appendix~\ref{app:proof_pnp_locality}.
Thus, only subsystems in $\mathcal I$ need to recompute their certificate quantities and local gain parameters; the remaining surviving local
operators retain their previous parametrization. Subsystems in
$\mathcal A(\cG,\cG^+)$ may subsequently be retrained for performance,
but retraining is not required for recertification.
By Proposition~\ref{prop:stabL2}, the post-change operator
$\mathbf R^+$ retains the prescribed gain $\gamma_{\mathbf R}$.
Together with Assumption~\ref{ass1:prestabilization}, this gives
nominal post-change $\ell_2$ stability. Under model mismatch, the
conclusion of Proposition~\ref{prop:robust_imc} also holds if the
post-change gain bounds satisfy~\eqref{eq:robust_imc_condition}.

\section{Distributed training and PnP adaptation}
\label{sec:disttrain}

The free parametrizations of Section~\ref{sec:param} turn
\eqref{eq:prob_design} into an unconstrained finite-dimensional training
problem while preserving the prescribed network gain.

For the neighbor-based parametrization, define
\[
\vartheta^{[i]}
:=
\col(\theta^{[i]},\eta^{[i]}),
\qquad
\vartheta:=\col_{i\in\cV}\vartheta^{[i]} .
\]
For the network-norm parametrization, the local trainable variables are
$\theta^{[i]}$ and $\eta_\alpha^{[i]}$. In the distributed
implementation below, the shared variable $\eta_{\rm net}$ is fixed
before training and known to all subsystems, while
$\widehat\kappa$ is assumed available for the considered network
configuration. Optimizing $\eta_{\rm net}$ would additionally require
aggregating its gradient contributions across the network.

Given samples
$\{(x_0^{(s)},\w^{(s)})\}_{s=1}^S$
from $\mathcal D_0\times\mathcal D$, we approximate
\eqref{eq:prob_design_obj} by
\begin{equation}\label{eq:Jhat_def}
\hat{\bar J}_{T,S}(\vartheta)
=
\frac{1}{S}
\sum_{s=1}^S
J_T\!\left(
x_{0:T}^{(s)}(\vartheta),
u_{0:T-1}^{(s)}(\vartheta)
\right).
\end{equation}
Training amounts to
$\min_\vartheta\hat{\bar J}_{T,S}(\vartheta)$.
Each iteration evaluates~\eqref{eq:Jhat_def} through forward
closed-loop rollouts and differentiates them backward in time.
Since the free parametrizations enforce~\eqref{eq:conditionL22} for
every parameter value, no stability constraint or projection is
required during training.

\subsection{Neighbor-to-neighbor training}
\label{subsec:dist_bptt}

For a neighbor-to-neighbor implementation, we restrict the local
operators to be strictly causal from $v^{[i]}$ to $z^{[i]}$.
This restriction is not required by the gain and stability results of
Section~\ref{sec:proposed}.

For a single rollout, let
$J_T=\sum_{t=0}^{T-1}l(x_t,u_t)+l_T(x_T)$, with separable stage cost
\begin{equation}
l(x_t,u_t)
=
\sum_{i\in\cV}
l_i(x_t^{[\cN_i]},u_t^{[i]}),
\end{equation}
and an analogous decomposition of $l_T$ when nonzero.
Let $\zeta_t^{[i]}$ collect the local plant and controller states,
together with any locally stored signals needed to evaluate the nominal
model and reconstruct $\hat e_t^{[i]}$.
Graph compatibility and strict causality then yield the one-hop
realization
\begin{equation}\label{eq:local_rollout_map}
\zeta_{t+1}^{[i]}
=
\phi_t^{[i]}(
\zeta_t^{[\cN_i]},w_t^{[i]};
\vartheta^{[\cN_i]}).
\end{equation}
Hence the Jacobian blocks entering backpropagation through time vanish
outside the corresponding graph neighborhoods, so both forward and
reverse computations can be implemented through neighbor-to-neighbor
exchanges. Information may still propagate over multiple hops over
time.

\subsection{Training across configurations and post-PnP adaptation}

If a set of admissible PnP configurations is known in advance, training
may use rollouts generated under different interconnections. For each
rollout, the certificate quantities and local gains are evaluated for
the corresponding configuration, while the neural parameters
$\theta^{[i]}$ of surviving local operators are shared across
configurations. This allows performance to be optimized over expected
topology changes before deployment.

After an actual PnP transition,
Theorem~\ref{thm:pnp_locality_new} restores the prescribed network gain
for the neighbor-based parametrization by updating only subsystems in
$\mathcal I$, without retraining the neural parameters of surviving
subsystems. If desired, any subset
$\mathcal A_{\rm tr}\subseteq\mathcal A(\cG,\cG^+)$
can subsequently be fine-tuned for performance while the remaining
controllers retain their recertified parameters.

\begin{corollary}[Fail-safe training]
\label{cor:failsafe}
Under either Theorem~\ref{thm:param_local} or
Theorem~\ref{thm:param_spectral}, every training iterate satisfies the
prescribed network gain and therefore preserves nominal closed-loop
$\ell_2$ stability.
After a PnP transition, the same guarantee holds whenever the
post-change parameters satisfy the corresponding free parametrization;
for the neighbor-based construction this is ensured by
Theorem~\ref{thm:pnp_locality_new}.
Under model mismatch, Proposition~\ref{prop:robust_imc} applies whenever
\eqref{eq:robust_imc_condition} holds.
\end{corollary}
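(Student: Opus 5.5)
The corollary follows by chaining results already established, so the plan is to verify that the hypotheses of each invoked result are met at every iterate and after PnP events; no new estimate is needed.

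\emph{Nominal, fixed topology.} Fix an arbitrary training iterate $\vartheta$. For the neighbor-based parametrization, the variables $\eta^{[i]}$ are arbitrary real vectors. Theorem~\ref{thm:param_local} then gives \eqref{eq:conditionL22} for the multipliers and gains $(\alpha^{[i]},\gamma^{[i]})$ defined in \eqref{eq:alpha_local_compact}--\eqref{eq:gamma_local_compact}. The network-norm parametrization is handled in the same way through Theorem~\ref{thm:param_spectral}, for every value of $\eta_{\rm net}$ and $\eta_\alpha^{[i]}$. The neural parameters $\theta^{[i]}$ are also arbitrary. By Assumption~\ref{ass:local_diss}, each local operator therefore satisfies \eqref{eq:local_dissipation} with the prescribed $\gamma^{[i]}$, so Proposition~\ref{prop:stabL2} applies. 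Well-posedness holds automatically under strictly causal local operators and is assumed otherwise. This yields \eqref{eq:R_gain_bound}, namely $\mathbf R\in\mathcal R^{\ell_2}_{\gamma_{\mathbf R}}(\cG)$; graph compatibility \eqref{eq:graph_compatibility} supplies the $\cG$-distributed structure.

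\emph{From the gain bound to closed-loop stability.} In the nominal case $\hat{\evec}=\evec=\mathbf E(x_0)+\mathbf S\w$, which lies in $\ell_2$ with $\|\evec\|\le\|x_0\|+\|\w\|$. Hence $\|\uvec\|\le\gamma_{\mathbf R}(\|x_0\|+\|\w\|)+\beta_R$. Assumption~\ref{ass1:prestabilization} states that $\Fb_{x_0}$ is $\ell_2$-stable, so $\x=\Fb_{x_0}(\uvec,\w)\in\ell_2$. This places $\K_{\mathbf R}$ in $\mathcal K_{\rm dist}^{\ell_2}(\cG,\F)$. Since the argument holds for any $\vartheta$, it holds at every gradient iterate.

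\emph{After a PnP transition.} Suppose the post-change parameters satisfy the corresponding free parametrization for $(\cG^+,\F^+)$. Repeating the two steps above with the post-change matrices and $\Fb^+_{x^+_{t_{\mathrm p}}}$, which is $\ell_2$-stable by the second part of Assumption~\ref{ass1:prestabilization}, gives post-change stability from $t_{\mathrm p}$ onward. For the neighbor-based construction, Theorem~\ref{thm:pnp_locality_new} provides \eqref{eq:conditionL22} for $\mathbf R^+$ under the recertification procedure, together with the PnP update constraints. Subsequent fine-tuning of a subset $\mathcal A_{\rm tr}$ keeps every iterate within the free parametrization, so Theorem~\ref{thm:param_local} applies again.

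\emph{Model mismatch.} Given \eqref{eq:robust_imc_condition}, the gain bound from Proposition~\ref{prop:stabL2} supplies exactly the hypothesis on $\mathbf R$ required by Proposition~\ref{prop:robust_imc}, which then gives finite-gain stability.

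\emph{Main subtlety.} I expect the only delicate point to be how the nonzero controller initial state $\xi_0$ (through $\beta_R$) interacts with the plant's additive term. These combine into an additive constant and do not affect the gain argument. A second point is that pre-change and post-change stability are certified separately on each time interval, as the statement itself implicitly does.
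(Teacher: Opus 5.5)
Your proposal is correct and follows the same route as the paper's proof. In both, Theorems~\ref{thm:param_local} and~\ref{thm:param_spectral} (and Theorem~\ref{thm:pnp_locality_new} after a PnP event) give \eqref{eq:conditionL22} for every parameter value, and Proposition~\ref{prop:stabL2} turns this into the gain bound on $\mathbf R$. Nominal stability then follows from the IMC argument ($\hat{\evec}=\evec\in\ell_2$ together with Assumption~\ref{ass1:prestabilization}), and the robust statement is Proposition~\ref{prop:robust_imc}. The only difference is that you spell out the IMC step and the role of $\beta_R$ explicitly, where the paper just cites them.
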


\emph{Proof.}
Theorems~\ref{thm:param_local} and~\ref{thm:param_spectral} guarantee
\eqref{eq:conditionL22}, and Proposition~\ref{prop:stabL2} gives the
prescribed gain of $\mathbf R$. Nominal closed-loop stability then
follows from the IMC argument, before and after PnP recertification.
The robust statement follows from Proposition~\ref{prop:robust_imc}.
\hfill$\blacksquare$

\section{Numerical example}\label{sec:num}
We consider the PnP DC microgrid of~\cite{nahata2020passivity}, equipped
with decentralized passivity-based primary controllers that preserve
stability under admissible PnP operations. A distributed neural layer is
added on top of this fixed baseline to improve transient performance.
PnP changes are handled by combining training over multiple expected
configurations with post-change recertification of the neighbor-based
parametrization.

\subsection{Microgrid setup and nominal IMC model}

\input{Figure/microgrid_setup.tex}

Each distributed generation unit (DGU) is modeled as
in~\cite{nahata2020passivity} (see Fig.~\ref{fig:DGU}) and is described
by its point of common coupling (PCC) voltage $V_i$, converter current
$I_{t,i}$, and integral state $x_{I,i}$. The augmented converter input, considered here  is
\[
V_{c,i}
=
k_{1,i}V_i+k_{2,i}I_{t,i}+k_{3,i}x_{I,i}+\Delta u_i,
\quad
\dot x_{I,i}=V_i^{\rm ref}-V_i ,
\]
where the first three terms, together with the integral-state dynamics,
define the decentralized primary controller designed to achieve
offset-free tracking of $V_i^{\rm ref}$, while $\Delta u_i$ is the
performance-boosting input.

Each DGU is connected to a ZIP load,
\begin{equation}\label{eq:zip_load}
I_{L,i}(V_i)
=
Y_{L,i}V_i+\bar I_{L,i}+\frac{P_{L,i}}{V_i},
\end{equation}
with constant-impedance, constant-current, and constant-power
components, respectively. The parameters reported in
Table~\ref{tab:microgrid_params} satisfy
$P_{L,i}<Y_{L,i}(V_i^{\rm ref})^2$, as required for local passivity
in~\cite{nahata2020passivity}. The considered load perturbations modify
only $\bar I_{L,i}$ and therefore preserve this condition. The primary
controller gains are
$k_{1,i}=0.1$, $k_{2,i}=0.5R_{t,i}$, and
$k_{3,i}=0.175(R_{t,i}/L_{t,i})$, and are kept fixed
across all load and topology configurations.

\begin{table}[t]
\centering
\caption{Microgrid parameters. The last two columns list all line
resistances appearing across the considered configurations; all line
inductances are $1$ mH.}
\label{tab:microgrid_params}
\scriptsize
\setlength{\tabcolsep}{2.3pt}
\renewcommand{\arraystretch}{0.95}

\resizebox{\columnwidth}{!}{%
\begin{tabular}{cccccccc@{\hspace{8pt}}cc}
\toprule
DGU & $R_t$ & $L_t$ & $C_t$ & $Y_L$ &
$\bar I_L$ & $P_L$ & $V^{\rm ref}$
& \multicolumn{2}{c}{Line resistances $[\Omega]$}\\
& [$\Omega$] & [mH] & [mF] & [S] & [A] & [W] & [V]
& &\\
\midrule
1 & .20 & 1.8 & 2.2 & .50 & 8.0 & 6.0 & 47.9
& $R_{12}=.050$ & $R_{16}=.100$\\
2 & .30 & 2.0 & 1.9 & .45 & 7.0 & 5.0 & 48.0
& $R_{13}=.070$ & $R_{26}=.090$\\
3 & .10 & 2.2 & 1.7 & .55 & 7.5 & 6.5 & 47.7
& $R_{24}=.040$ & $R_{36}=.110$\\
4 & .50 & 3.0 & 2.5 & .40 & 6.0 & 4.5 & 48.0
& $R_{34}=.060$ & $R_{46}=.085$\\
5 & .40 & 1.2 & 2.0 & .35 & 6.5 & 5.0 & 47.8
& $R_{45}=.080$ & $R_{56}=.080$\\
6 & .25 & 1.6 & 2.1 & .42 & 4.4 & 5.5 & 48.1
& &\\
\bottomrule
\end{tabular}}
\end{table}

Figure~\ref{fig:DCgrid} summarizes the topology changes considered in
the simulation. Initially, DGUs~1--5 are connected through the black
and red links, with edge set
$\mathcal E^-=\{(1,2),\allowbreak(1,3),\allowbreak(2,4),\allowbreak
(3,4),\allowbreak(4,5)\}$.
At the plug-in of DGU~6, the green links $(1,6)$ and $(5,6)$ are added.
Later, DGU~3 is unplugged, removing the red links $(1,3)$ and $(3,4)$.
The simulated sequence therefore consists of an initial load step, the
plug-in of DGU~6, a subsequent load step in the enlarged network, and
the unplugging of DGU~3.

To obtain the subsystem partition used in Section~\ref{sec:prelim},
each line-current state is assigned once to one of its incident DGUs.
The same physical line current remains available to both endpoint DGUs
when evaluating their local dynamics, but its state and reconstructed
residual appear in only one local block. Fixed slots are reserved for
all candidate line currents assigned to each DGU, with inactive slots
set to zero. Hence the dimensions of the surviving local models and
neural controllers remain unchanged across the considered PnP
configurations.

The simulations use the nonlinear ZIP-load model, while the IMC
controller uses a nominal model associated with each operating
configuration. We use $\mu$ to index a configuration, defined by the
active network topology and load parameters. For a fixed orientation
of each line, let $b_{i\ell}\in\{-1,0,1\}$ denote its incidence with
DGU $i$. The corresponding equilibrium satisfies
\begin{equation}\label{eq:local_equilibrium}
\begin{aligned}
\bar V_i
&=V_i^{\rm ref},\qquad
\bar I_\ell
=
\frac{1}{R_\ell}
\sum_j b_{j\ell}V_j^{\rm ref},\\
\bar I_{t,i}
&=
Y_{L,i}V_i^{\rm ref}
+\bar I_{L,i}
+\frac{P_{L,i}}{V_i^{\rm ref}}
+\sum_{\ell\in\mathcal E_i}b_{i\ell}\bar I_\ell,\\
\bar x_{I,i}
&=
\frac{
(1-k_{1,i})V_i^{\rm ref}
+(R_{t,i}-k_{2,i})\bar I_{t,i}}
{k_{3,i}} .
\end{aligned}
\end{equation}
Thus, for fixed voltage references, each DGU can determine the
equilibrium quantities required by its local model from local load
data, incident-line parameters, and neighboring voltage references.

For each configuration $\mu$, the nonlinear continuous-time model is
discretized using forward Euler with $T_{\rm samp}=5\cdot10^{-5}\,\mathrm{s}$.
The nominal IMC model is obtained by linearizing the same continuous-time
dynamics around~\eqref{eq:local_equilibrium} and applying the same
discretization.
Here $x_k$ is the physical state, while
$\delta x_k:=x_k-\bar x_{\mu_k}$ corresponds to
$x_t$ in Section~\ref{sec:prelim}.
Let $x_{k+1}^-$ denote the state propagated under
configuration $\mu_k$, immediately before a possible
PnP transition.
For the active configuration $\mu_k$, the resulting nonlinear
discrete-time model admits the decomposition
\begin{equation*}
x_{k+1}^-
=
\bar x_{\mu_k}
+
A_{\mu_k}\delta x_k
+
B_{\mu_k}\Delta u_k
+
r_{\mu_k}(\delta x_k),
\end{equation*}
where $A_{\mu_k}$ and $B_{\mu_k}$ define the nominal linear IMC model
and $r_{\mu_k}$ is the residual nonlinear model mismatch.

A PnP transition is represented by $x_{k+1}=\Gamma_k x_{k+1}^-+\sigma_k$, where $\Gamma_k:
\mathbb R^{n_{\mu_k}}
\to
\mathbb R^{n_{\mu_{k+1}}}$ maps the surviving state components from the
pre-change to the post-change configuration, while $\sigma_k$
initializes newly activated components.
For a load change without topology modification,
$n_{\mu_{k+1}}=n_{\mu_k}$, $\Gamma_k=I$, and $\sigma_k=0$. Expressing the resulting state relative to the
equilibrium of the new configuration gives
\begin{equation}\label{eq:num_deviation_model}
\delta x_{k+1}
=
\Gamma_k\!\left(
A_{\mu_k}\delta x_k
+
B_{\mu_k}\Delta u_k
+
r_{\mu_k}(\delta x_k)
\right)
+d_k,
\end{equation}
where $d_k
:=
\Gamma_k\bar x_{\mu_k}
+\sigma_k
-\bar x_{\mu_{k+1}} .$ 
At a PnP event with $t_{\mathrm p}=k+1$, the
post-change initial state is
$x_{t_{\mathrm p}}^+=\delta x_{k+1}$.
Using
$\Gamma_k(A_{\mu_k}\delta x_k+B_{\mu_k}\Delta u_k)$
as the nominal IMC prediction, the reconstructed residual is therefore
\[
\hat e_{k+1}
=
\Gamma_k r_{\mu_k}(\delta x_k)+d_k .
\]
Thus, the nonlinear model mismatch and the impulse generated by
equilibrium recentering and the PnP state transition enter the same
disturbance-feedback channel.

\subsection{Controller design and closed-loop results}

We first compare the two parametrizations of Section~\ref{sec:param}
for the controller interconnection used in the simulations. Each local
operator has one scalar output $z_i$, while its inputs consist of the
residual components assigned to DGU $i$ and one additional communication
channel. The matrix $M_{ve}$ injects each reconstructed-residual
component, with unit gain, into exactly one local operator according to
the state partition described above. The matrix $M_{vz}$ feeds the
communication channel of each DGU with the sum of the outputs of its
active neighbors, and $M_{uz}=I$ maps $z_i$ to the corresponding
performance-boosting channel.

For the six-DGU configuration after the plug-in of DGU~6 and before
the unplugging of DGU~3, let $A_{\cG}$ denote the binary adjacency
matrix of the active DGU graph. With unit communication weight,
$\gamma_{\mathbf R}=1$, and $\varepsilon\downarrow0$, maximizing over
the free variables gives
\[
\bar\gamma_{\rm nb}
=
\frac{1}{\sqrt{d_{\max}(\cG)(1+d_{\max}(\cG))}},
\qquad
\bar\gamma_{\rm norm}
=
\frac{1}{1+\|A_{\cG}\|_2}.
\]
For this topology, $d_{\max}(\cG)=3$ and
$\|A_{\cG}\|_2=2.39$, yielding
$\bar\gamma_{\rm nb}=0.289$ and
$\bar\gamma_{\rm norm}=0.295$.
The network-norm parametrization therefore provides, in this normalized setting with $\gamma_{\mathbf R}=1$, only a modest
increase in the largest certified common local gain for this
configuration. We use the neighbor-based parametrization below because
it additionally enables localized PnP recertification.

For the closed-loop simulations, we set $\gamma_{\mathbf R}=25$. For a configuration with $N_\mu$ active
DGUs, let $n_i$ denote the number of reconstructed-residual components
assigned to DGU $i$. Each local operator has input dimension
$q_i=n_i+1$, where the first $n_i$ channels receive the local residual
block and the last channel is used for communication, and has one
scalar output $z_i$ (see Fig.~\ref{fig:interconnectionREN}). Hence, $q=\sum_{i\in\mathcal V_\mu}q_i,$ $r=m=N_\mu$.
The residual-injection matrix is
$M_{ve}
=
\blkdiag_{i\in\mathcal V_\mu}
\left(
[
I_{n_i}, \ 
0_{1\times n_i}
]^\top
\right)
\in\mathbb R^{q\times n}$,
so that each residual component is injected with unit gain into exactly
one local operator. The matrix $M_{vz}\in\mathbb R^{q\times N_\mu}$
has nonzero entries only in the communication row of each local block, while $M_{uz}=I_{N_\mu}$.

Each local operator $\mathbf R^{[i]}$ is implemented as a
REN~\cite{revay2023recurrent} satisfying
Assumption~\ref{ass:local_diss}, with input dimension $q_i$, scalar
output, $6$ internal states, and equilibrium-network dimension $6$.
Its local gain $\gamma^{[i]}$ is determined by the neighbor-based free
parametrization for the current configuration.
The outputs $z_i=(\mathbf{R}^{[i]}(\mathbf{v}^{[i]}))_k$ of the interconnected operator are used
both in the neighbor-to-neighbor interconnection through $M_{vz}$ and
as the magnitude signals for a local Magnitude And Direction (MAD)
policy~\cite{furieri2025mad,cao2025scaling}. Specifically,
\begin{equation}\label{eq:mad_factorization}
\begin{aligned}
\Delta u_i(k)
&=
|(\mathbf{R}^{[i]}(\mathbf{v}^{[i]}))_k|
\tanh \ (\mathrm{MLP}_i(\psi_i(k))),
\end{aligned}
\end{equation}
where $\psi_i(k)$ collects local and neighbor-available voltage-tracking
features. The direction term is applied after the certified
interconnection: the signals exchanged through $M_{vz}$ are the
pre-MAD outputs $z_i$. Since $|\tanh (\mathrm{MLP}_i(\psi_i(k)))| \le1$,
$\|\Delta\uvec\|
\le
\|\mathbf z\|
\le
\gamma_{\mathbf R}\|\hat{\evec}\|+\beta_R,$
so the direction networks do not increase the prescribed network gain.
Each direction map is represented by a local MLP with one hidden-layer of width~8.

For rollout $s$, let
$e_{V,i}^{(s)}(k):=V_i^{(s)}(k)-V_i^{\rm ref}$.
The training objective is written as
\begin{equation}\label{eq:training_loss_num}
J^{(s)}
=
\sum_{k=0}^{T-1}
\left(
\ell_{\rm reg}^{(s)}(k)
+
\ell_{\rm pen}^{(s)}(k)
\right),
\end{equation}
where $\ell_{\rm reg}$ collects quadratic penalties on voltage-tracking
errors, the performance-boosting inputs $\Delta u$, and their time increments. The term $\ell_{\rm pen}$ collects additional penalties
used around load and PnP events, including voltage excursions, peak
errors, post-event settling, and deviations outside the operating tube
$|V_i-V_i^{\rm ref}|\le 10\,\mathrm V$. In particular, the latter is
penalized through terms proportional to
$\bigl(|e_{V,i}^{(s)}(k)|-10\,\mathrm V\bigr)_+^2$, where
$(a)_+:=\max\{a,0\}$.
The complete loss and numerical weights are provided in the
accompanying code.

Training is performed centrally for computational convenience using the
unconstrained formulation of Section~\ref{sec:disttrain}, with horizon
$T=3000$. The DGU-specific REN and MAD parameters are trained jointly over several
load changes and admissible PnP configurations, and the same parameters
for each DGU are used across all configurations in which that DGU is
present. Candidate plug-in units are
included during this training. 

For the robust check, however, global gain bounds are not available
because of the nonlinear ZIP loads. We therefore assess the condition of
Proposition~\ref{prop:robust_imc} on the operating tube
$|V_i-V_i^{\rm ref}|\le10\,\mathrm V$, whose violation is penalized
during training and which is verified a posteriori for all simulated
trajectories. Within this region, the pre-stabilized dynamics admit a Lur'e
representation, and the ZIP nonlinear remainder is sector bounded.
A standard dissipativity LMI (cf.~\cite{arcak2016networks}) provides the regional gain
$\gamma_{\F}^\mu$ from $\Delta\uvec$ to $\x$. The same sector bound
gives $\gamma_{\boldsymbol\Delta}^\mu$.
Across the considered configurations, the largest bounds are
$\gamma_{\F}^\mu=14.26$ and
$\gamma_{\boldsymbol\Delta}^\mu=2.2\cdot10^{-5}$.
With $\gamma_{\mathbf R}=10$, the regional small-gain check
$\gamma_{\boldsymbol\Delta}^\mu
(\gamma_{\F}^\mu+1)\gamma_{\mathbf R}=8.5\cdot10^{-3}<1$
is satisfied for every configuration. The PnP recentering term $d_k$ is treated
separately as an impulse $\ell_2$ residual and does not enter the mismatch gain.

Figures~\ref{fig:Plugin}--\ref{fig:loadthm1} compare the proposed
controller with the decentralized primary baseline.
Figure~\ref{fig:Plugin} shows the plug-in of DGU~6 at
$t=4\,\mathrm{s}$, Fig.~\ref{fig:loadthm1} the subsequent load
increase at $t=8\,\mathrm{s}$, and Fig.~\ref{fig:Plugout} the
unplugging of DGU~3 at $t=12\,\mathrm{s}$.
Quantitatively, over the $0.4\,\mathrm{s}$ following each event
($N_{\rm eval}=8000$), the proposed controller reduces the peak voltage
deviation by $60.1\%$, $59.1\%$, and $56.4\%$ for the plug-in, load
change, and unplugging, respectively, while the corresponding reductions
in integrated squared voltage error are $15.3\%$, $76.1\%$, and $72.4\%$.

\begin{figure*}
    \centering
    \includegraphics[width=0.9\linewidth]{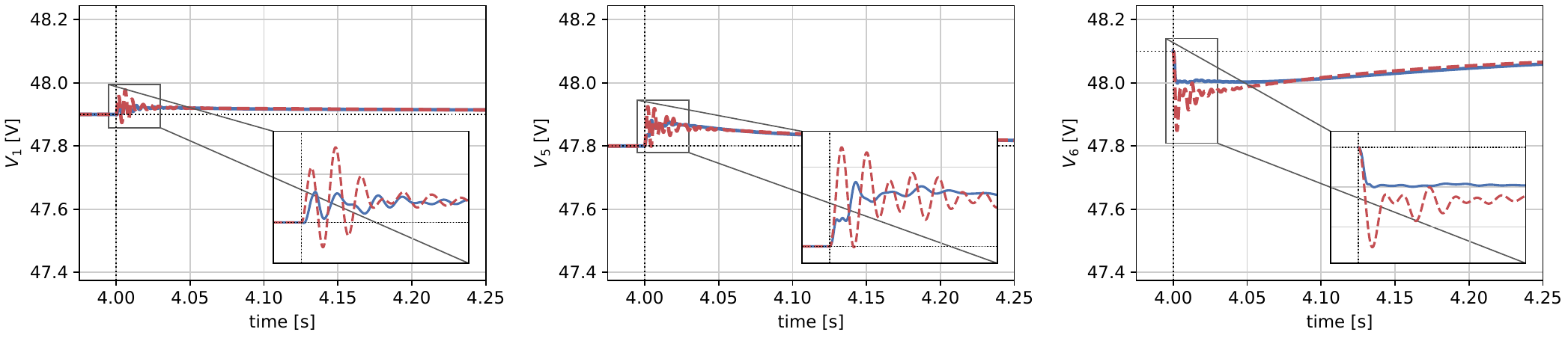}
    \caption{Plug-in of DGU~6 at $t=4\,\mathrm{s}$: PCC voltages under the
proposed controller (blue) and the decentralized baseline~\cite{nahata2020passivity} (red dashed).}
    \label{fig:Plugin}
\end{figure*}

\begin{figure*}
    \centering
    \includegraphics[width=0.6\linewidth]{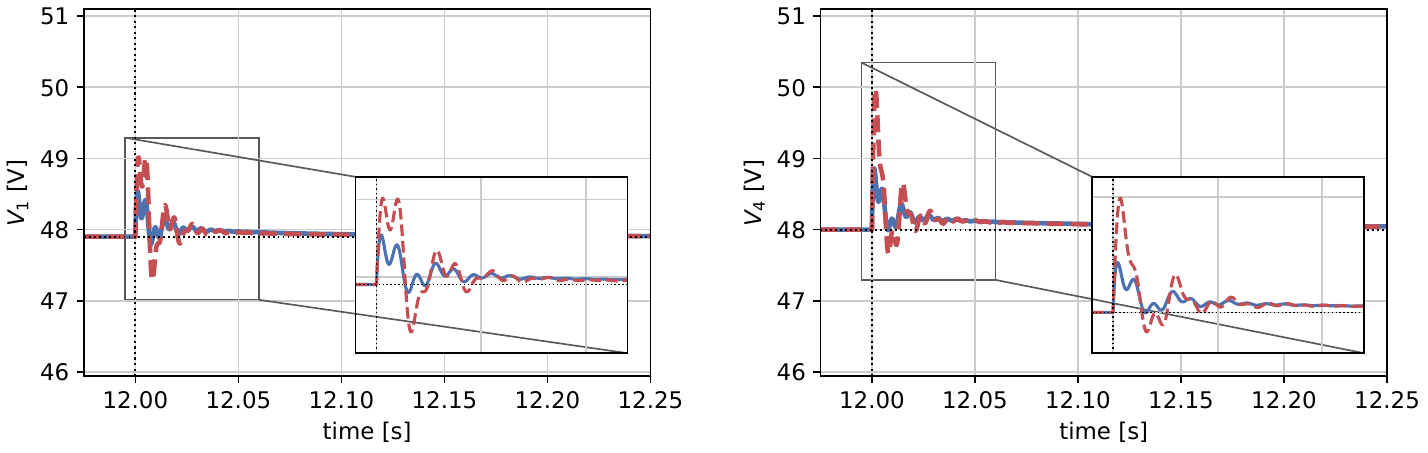}
    \caption{Unplugging of DGU~3 at $t=12\,\mathrm{s}$ under the proposed
controller (blue) and the decentralized baseline~\cite{nahata2020passivity} (red dashed).}
    \label{fig:Plugout}
\end{figure*}

\begin{figure*}
    \centering
    \includegraphics[width=0.9\linewidth]{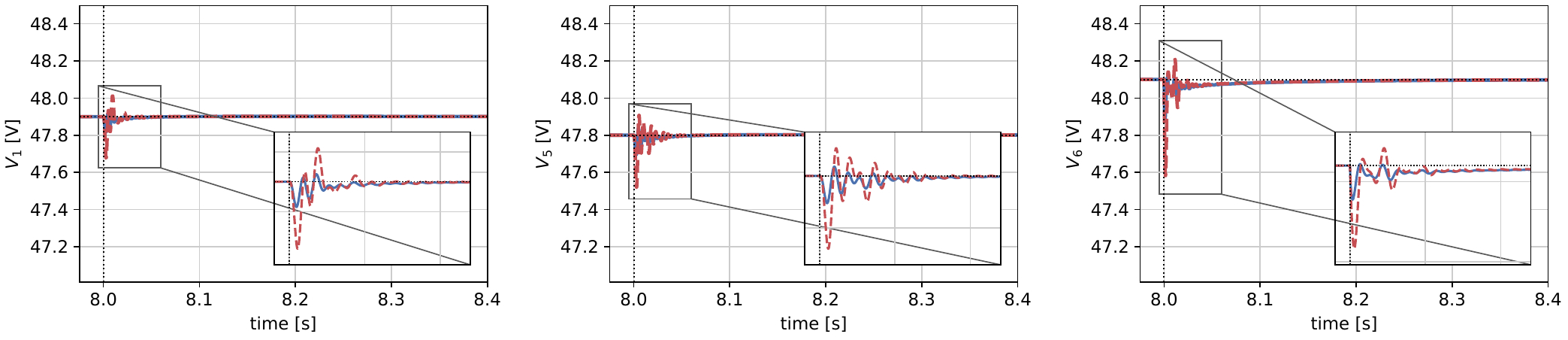}
    \caption{Load disturbance at PCC~6 at $t=8\,\mathrm{s}$ in the enlarged network under the proposed
controller (blue) and the decentralized baseline~\cite{nahata2020passivity} (red dashed).}
    \label{fig:loadthm1}
\end{figure*}

\section{Conclusions}\label{sec:conclusions}

We developed a framework for fail-safe training of distributed neural
performance-boosting controllers for pre-stabilized PnP networks. An IMC
reformulation reduces closed-loop stability to a prescribed network
$\ell_2$ gain, enforced through two free parametrizations. The
neighbor-based construction enables localized PnP recertification, whereas
the network-norm construction uses network-wide coupling information and
can reduce conservatism. For strictly causal local realizations and separable costs, training
admits a neighbor-to-neighbor implementation while every iterate preserves
the prescribed gain. A DC-microgrid study showed improved voltage
transients on top of a decentralized PnP-stable baseline.

\bibliographystyle{IEEEtran}
\bibliography{biblio}

\appendix

\section{Proof of Proposition~\ref{prop:robust_imc}}
\label{app:proof_robust_imc}

We use a truncation argument to avoid assuming that the closed-loop
signals are in $\ell_2$ a priori. For a sequence $\mathbf a$, let
$\|\mathbf a\|_T
:=
\left(\sum_{t=0}^T\|a_t\|^2\right)^{1/2}.$
By causality, the bounds
\eqref{eq:true_plant_gain}--\eqref{eq:mismatch_gain} and the gain bound
on $\mathbf R$ hold on every truncated horizon. From \eqref{eq:model_mismatch},
$\hat{\evec}
=
\x-\widehat\F(\x,\uvec)
=
\boldsymbol\Delta(\x,\uvec)
+\mathbf E(x_0)+\mathbf S\w $.
Hence,
\begin{align}
\|\hat{\evec}\|_T
&\le
\gamma_{\boldsymbol\Delta}
\bigl(\|\x\|_T+\|\uvec\|_T\bigr)
+\|x_0\|+\|\w\|_T \label{eq:proof_ehat_pre}\\
&\le
(1+\gamma_{\boldsymbol\Delta}\gamma_{\F})\|\w\|_T
+\gamma_{\boldsymbol\Delta}(1+\gamma_{\F})\|\uvec\|_T
+b_0,
\notag
\end{align}
where
$b_0:=\|x_0\|+\gamma_{\boldsymbol\Delta}\beta_F(\|x_0\|)$.
Using
$\|\uvec\|_T\le
\gamma_{\mathbf R}\|\hat{\evec}\|_T+\beta_R$
in \eqref{eq:proof_ehat_pre} gives
\[
\Gamma\|\hat{\evec}\|_T
\le
(1+\gamma_{\boldsymbol\Delta}\gamma_{\F})\|\w\|_T
+b_1,
\]
where
$\Gamma
:=
1-\gamma_{\boldsymbol\Delta}(1+\gamma_{\F})
\gamma_{\mathbf R},
$ $
b_1
:=
b_0+
\gamma_{\boldsymbol\Delta}(1+\gamma_{\F})\beta_R .$
By \eqref{eq:robust_imc_condition}, $\Gamma>0$, and therefore
\[
\|\hat{\evec}\|_T
\le
\frac{1+\gamma_{\boldsymbol\Delta}\gamma_{\F}}{\Gamma}
\|\w\|_T
+
\frac{b_1}{\Gamma}.
\]
Substitution into the bounds for $\mathbf R$ and the true plant then
gives
\[
\|\uvec\|_T\le c_u\|\w\|_T+b_u,
\qquad
\|\x\|_T\le c_x\|\w\|_T+b_x,
\]
for finite constants $c_u,c_x\ge0$ and
$b_u,b_x\ge0$ independent of $T$ (with the additive terms possibly
depending on $x_0$). Since $\w\in\ell_2$, these bounds are uniform in
$T$. Taking $T\to\infty$ proves that
$\uvec\in\ell_2^m$ and $\x\in\ell_2^n$, and hence the closed loop is
finite-gain $\ell_2$-stable.
\hfill$\blacksquare$

\section{Proof of Theorem~\ref{thm:param_local}}
\label{app:proof_param_local}

Let
$c^{[i]}
:=
\alpha^{[i]}(\gamma^{[i]})^2.$
With the notation of Proposition~\ref{prop:stabL2}, condition
$G^\top QG\preceq0$ is equivalent to
\begin{equation}\label{eq:LMI_extended_app}
\begin{bmatrix}
\Pi_{N,z}-H-M_{vz}^\top\Pi_{N,v}M_{vz}
&
-M_{vz}^\top\Pi_{N,v}M_{ve}
\\
-M_{ve}^\top\Pi_{N,v}M_{vz}
&
\gamma_{\mathbf R}^2I_n
-M_{ve}^\top\Pi_{N,v}M_{ve}
\end{bmatrix}
\succeq0 .
\end{equation}

By Assumption~\ref{ass:MzMe}, $M_{ve}$ is a permuted coordinate
injection. Hence $M_{ve}^\top\Pi_{N,v}M_{ve}$ is diagonal and selects
the entries $c^{[i]}$ associated with the coordinates in
$\mathcal J_{v_i}^1$. From
\eqref{eq:Bi_local}--\eqref{eq:c_local_compact}, for every subsystem $i$ with
$\mathcal J_{v_i}^1\neq\emptyset$,
\[
c^{[i]}
<
\bar c^{[i]}
\le
\frac{\gamma_{\mathbf R}^2}
{1+\rho_{v,1}^{[i]}\gamma_{\mathbf R}^2}
\le
\gamma_{\mathbf R}^2.
\]
Therefore, $\gamma_{\mathbf R}^2I_n
-
M_{ve}^\top\Pi_{N,v}M_{ve}
\succ0.$\\
Taking the Schur complement of
\eqref{eq:LMI_extended_app}, it is sufficient to show
\begin{equation}\label{eq:schur_local_app}
\Pi_{N,z}-H-M_{vz}^\top\widetilde D M_{vz}
\succeq0,
\end{equation}
where
\begin{multline*}
   \widetilde D
:=
\Pi_{N,v}
+ \\
\Pi_{N,v}M_{ve}
\left(
\gamma_{\mathbf R}^2I_n
-
M_{ve}^\top\Pi_{N,v}M_{ve}
\right)^{-1}
M_{ve}^\top\Pi_{N,v}.
\end{multline*}
Because $M_{ve}$ is a coordinate injection,
$\widetilde D$ is diagonal. For a coordinate $k$ associated with the local operator
$\mathbf R^{[i]}$,
$\widetilde D_{kk}
=
c^{[i]},$ if $k\in\mathcal J_{v_i}^0$, or
$\widetilde D_{kk}
=
\frac{c^{[i]}\gamma_{\mathbf R}^2}
{\gamma_{\mathbf R}^2-c^{[i]}},
$ when $ k\in\mathcal J_{v_i}^1$.
Moreover, $\widetilde D\succ0$. Hence, by another Schur complement,
\eqref{eq:schur_local_app} is equivalent to
\begin{equation}\label{eq:schur_gershgorin_app}
\begin{bmatrix}
\Pi_{N,z}-H & M_{vz}^\top\\
M_{vz} & \widetilde D^{-1}
\end{bmatrix}
\succeq0.
\end{equation}

We now verify \eqref{eq:schur_gershgorin_app} by diagonal dominance.
Consider first a row associated with
$z_j$, $j\in\mathcal J_{z_i}$. Its diagonal entry satisfies
$\alpha^{[i]}-h_j
\ge
\rho_z^{[i]}
+(\eta_\alpha^{[i]})^2+\varepsilon
>
\rho_z^{[i]}$,
where we used $h_j\le\bar h_i$. By
\eqref{eq:rho_v0_z_local}, $\rho_z^{[i]}$ bounds the corresponding
absolute column sum of $M_{vz}$ and therefore the sum of the absolute
values of the off-diagonal entries in this row. Consider now a row associated with
$v_k$, and define
$r_k:=\sum_j |(M_{vz})_{kj}|.$
If $k\in\mathcal J_{v_i}^0$, then
$r_k\le\rho_{v,0}^{[i]}$. If $\rho_{v,0}^{[i]}=0$, the row has no
off-diagonal contribution from $M_{vz}$. Otherwise,
\eqref{eq:Bi_local}--\eqref{eq:c_local_compact} give
$c^{[i]}
<
\bar c^{[i]}
\le
1/\rho_{v,0}^{[i]}$,
and therefore $\widetilde D_{kk}r_k
=
c^{[i]}r_k
<1.$
If $k\in\mathcal J_{v_i}^1$, then
$r_k\le\rho_{v,1}^{[i]}$ and
\[
c^{[i]}
<
\frac{\gamma_{\mathbf R}^2}
{1+\rho_{v,1}^{[i]}\gamma_{\mathbf R}^2}.
\]
Rearranging this inequality gives
$\frac{c^{[i]}\gamma_{\mathbf R}^2}
{\gamma_{\mathbf R}^2-c^{[i]}}
\rho_{v,1}^{[i]}
<1.$
Hence, $\widetilde D_{kk}r_k<1.$
Thus, for every row of
\eqref{eq:schur_gershgorin_app}, the diagonal entry is not smaller
than the sum of the absolute values of the off-diagonal entries.
Since the matrix is symmetric with nonnegative diagonal entries, it is
positive semidefinite. Therefore
\eqref{eq:conditionL22} holds.
\hfill$\blacksquare$

\section{Proof of Theorem~\ref{thm:param_spectral}}
\label{app:proof_param_spectral}

By construction, $\alpha^{[i]}(\gamma^{[i]})^2=c,
 \ \forall i\in\cV,$
and therefore $\Pi_{N,v}=cI_q.$
Moreover, \eqref{eq:c_spectral} gives $0<c<\gamma_{\mathbf R}^2$, $\forall \eta_{\rm{net}}\in\mathbb R$.
Using $M_{ve}^\top M_{ve}=I_n$, the lower-right block of
\eqref{eq:LMI_extended_app} becomes
\[
\gamma_{\mathbf R}^2I_n
-
M_{ve}^\top\Pi_{N,v}M_{ve}
=
(\gamma_{\mathbf R}^2-c)I_n
\succ0.
\]
Taking the Schur complement, it is sufficient to show
\begin{equation}\label{eq:schur_spectral_app}
\Pi_{N,z}-H-M_{vz}^\top\widetilde D M_{vz}\succeq0,
\end{equation}
where
$\widetilde D
=
cI_q
+
\frac{c^2}{\gamma_{\mathbf R}^2-c}
M_{ve}M_{ve}^\top$.
Since $M_{ve}^\top M_{ve}=I_n$,
$M_{ve}M_{ve}^\top$ is an orthogonal projector and hence
$M_{ve}M_{ve}^\top\preceq I_q$. Therefore,
\[
\widetilde D
\preceq
\left(
c+\frac{c^2}{\gamma_{\mathbf R}^2-c}
\right)I_q
=
\tau I_q.
\]
It follows that $M_{vz}^\top\widetilde D M_{vz}
\preceq
\tau M_{vz}^\top M_{vz}
\preceq
\tau \widehat\kappa I_r.$ 
Thus, \eqref{eq:schur_spectral_app} is satisfied if
$\Pi_{N,z}-H-\tau\widehat\kappa I_r\succeq0$.
For every coordinate $j\in\mathcal J_{z_i}$,
\[
\alpha^{[i]}-h_j-\tau\widehat\kappa
\ge
(\eta_\alpha^{[i]})^2+\varepsilon
>0,
\]
where we used $h_j\le\bar h_i$. Hence $\Pi_{N,z}-H-\tau\widehat\kappa I_r
\succ0,$
which proves~\eqref{eq:conditionL22}.
\hfill$\blacksquare$

\section{Proof of Theorem~\ref{thm:pnp_locality_new}}
\label{app:proof_pnp_locality}

Consider first a surviving subsystem $i\notin\mathcal I$. By
definition of~$\mathcal I$, $\bar h_i^+ + \rho_z^{+[i]}
=
\bar h_i + \rho_z^{[i]}$, $\bar c^{+[i]}=\bar c^{[i]}$.
Since the recertification procedure also keeps
$\eta^{+[i]}=\eta^{[i]}$,
\eqref{eq:alpha_local_compact}--\eqref{eq:gamma_local_compact}
give
$\alpha^{+[i]}=\alpha^{[i]}$,
$c^{+[i]}=c^{[i]}$, and
$\gamma^{+[i]}=\gamma^{[i]}$.
\\
For every $i\in\mathcal I$, the quantities
$(\alpha^{+[i]},\gamma^{+[i]})$ are instead constructed from the
post-change quantities according to
Theorem~\ref{thm:param_local}. Therefore the complete collection $\{(\alpha^{+[i]},\gamma^{+[i]})\}_{i\in\cV^+}$ satisfies the neighbor-based parametrization of
Theorem~\ref{thm:param_local} for the post-change interconnection.
Hence~\eqref{eq:conditionL22} holds with the same
$\gamma_{\mathbf R}$.
\\
It remains to verify the PnP update constraints. Consider a surviving
subsystem $i\notin\mathcal A(\cG,\cG^+)$. By
Definition~\ref{def:pnp_update},
$\cN_i^+=\cN_i$ and $\F^{+[i]}\equiv\F^{[i]}$.
Since the neighborhood of $i$ is unchanged, no network connection
incident to $i$ is added or removed. Under the interconnection update
described in Section~\ref{sec:pnp}, the coupling blocks of $M_{vz}$
associated with subsystem $i$ are therefore unchanged. The corresponding
blocks of $M_{ve}$ and $M_{uz}$ are also unchanged by construction.
Consequently,
$\rho_z^{+[i]}=\rho_z^{[i]}$,
$\rho_{v,0}^{+[i]}=\rho_{v,0}^{[i]}$,
$\rho_{v,1}^{+[i]}=\rho_{v,1}^{[i]}$, and
$\bar h_i^+=\bar h_i$.
The sets entering~\eqref{eq:Bi_local} are therefore unchanged and,
since $\bar c_0^{[i]}$ is fixed across network configurations,
\eqref{eq:cbar_local} gives
$\bar c^{+[i]}=\bar c^{[i]}$.
Thus $i\notin\mathcal I$, which proves
$\mathcal I\subseteq\mathcal A(\cG,\cG^+)$. The recertification procedure therefore leaves
$\theta^{[i]}$, $\eta^{[i]}$, $\alpha^{[i]}$, and
$\gamma^{[i]}$ unchanged for every surviving subsystem
$i\notin\mathcal A(\cG,\cG^+)$. Together with
$\F^{+[i]}\equiv\F^{[i]}$, $\cN_i^+=\cN_i$, and the unchanged
controller-interconnection blocks, this implies that its complete
local controller realization is unchanged, namely
$\K^{+[i]}\equiv\K^{[i]}$ for all
$i\in(\cV\cap\cV^+)\setminus\mathcal A(\cG,\cG^+)$.
Hence~\eqref{eq:pnp_update_constraint} holds.
\hfill$\blacksquare$

\end{document}